\newtheorem {theorem} {Theorem}
\newtheorem {lemma} {Lemma}
\newtheorem {corollary} {Corollary}
\newenvironment {proof}{\textbf {Proof:}}{\hfill \ensuremath {\boxtimes} \vspace{11pt}}
\newcommand{\Reals}{\mathbb{R}}
\newcommand{\Sphere}{\mathbb{S}}
\newcommand{\setL}{\ensuremath{L}}
\title{Balanced partitions of $3$-colored geometric sets in the plane\thanks{This paper was published in Discrete Applied Mathematics, 181:21--32, 2015~\cite{journal}.}}
\date{}
\author{Sergey Bereg\thanks{University of Texas at Dallas, USA. besp@utdallas.edu.} \and 
Ferran Hurtado\thanks{Dept. de  Matem\'atica Aplicada II, Universitat Polit\`{e}cnica de Catalunya, Spain. $\{$ferran.hurtado, carlos.seara, rodrigo.silveira$\}$@upc.edu.} \and 
Mikio Kano\thanks{Ibaraki University, Japan. kano@mx.ibaraki.ac.jp.} \and 
Matias Korman\thanks{National Institute of Informatics, Japan.  korman@nii.ac.jp. JST, ERATO, Kawarabayashi Large Graph Project.} \and 
Dolores Lara\thanks{CINVESTAV-IPN, Dept. de Computaci{\'o}n, Mexico. dlara@cs.cinvestav.mx.} \and 
Carlos Seara\footnotemark[3] \and 
Rodrigo I. Silveira\footnotemark[3] $^{,}$\thanks{Dept. de  Matem\'atica \& CIDMA, Universidade de Aveiro, Portugal. rodrigo.silveira@ua.pt.} \and 
Jorge Urrutia\thanks{Instituto de Matem\'aticas, UNAM, Mexico. urrutia@matem.unam.mx.} \and 
Kevin Verbeek\thanks{University of California, Santa Barbara, USA. kverbeek@cs.ucsb.edu.}}
\begin{document}

\maketitle

\begin{abstract}
Let $S$ be a finite set of geometric objects partitioned into classes or \emph{colors}.  A subset
$S'\subseteq S$ is said to be \emph{balanced} if $S'$ contains the same amount of elements of $S$ from each of the colors.
We study several problems on partitioning $3$-colored sets of points and lines in the plane into two balanced subsets:
(a) We prove that for every 3-colored arrangement of lines there exists a segment that intersects exactly one line of each color, and that when there are $2m$ lines of each color, there is a segment intercepting $m$ lines of each color.
(b) Given $n$ red points, $n$ blue points and $n$ green points on any closed Jordan curve $\gamma$, we show that for every integer $k$ with $0 \leq k \leq n$ there is a pair of disjoint intervals on $\gamma$ whose union contains exactly $k$ points of each color.
(c) Given a set $S$ of $n$ red points, $n$ blue points and $n$ green points in the integer lattice satisfying certain constraints,
there exist two rays with common apex,  one vertical and one horizontal, whose union splits the plane
into two regions, each one containing a balanced subset of $S$.
\end{abstract}

\section{Introduction}\label{section:introduction}

Let $S$ be a finite set of geometric objects distributed into classes or \emph{colors}.  A subset
$S_1\subseteq S$ is said to be \emph{balanced} if $S_1$ contains the same amount of elements of $S$ from each of the colors.
Naturally, if $S$ is balanced, its complement is also balanced, hence we talk of a {\em balanced bipartition} of $S$.

When the point set $S$ is in the plane, and the balanced partition is defined by a geometric object $\zeta$ splitting the plane into two regions, we say that $\zeta$ is \emph{balanced} (and {\em nontrivial} if both regions contain points of $S$).
A famous example of such a partition is the discrete version of the \emph{ham-sandwich theorem}: given a set of $2n$ red points and $2m$ blue
points in general position in the plane, there always exists a line $\ell$ such that each halfplane bounded by $\ell$ contains exactly $n$ red points and $m$ blue points.
It is well known that this theorem can be generalized to higher dimensions and can  be formulated in terms of splitting continuous measures.

There are also plenty
of variations of the ham-sandwich theorem. For example, it has been proved that given $gn$ red points and $gm$ blue points in the plane in general position, there exists a subdivision of the plane into $g$ disjoint convex polygons, each of which contains $n$ red points and $m$ blue points \cite{Sergei2000}. Also, it was shown in  \cite{Imre2001} (among other results) that for any two measures in the plane there are $4$ rays with common apex such that each of the sectors they define contains $\frac{1}{4}$ of both measures.
For many more extensions and detailed results we refer the interested reader to~\cite{jorge1,jorge2}, the survey \cite{Kaneko2003} of Kaneko and Kano and to the book \cite{Matousek} by Matou{\v s}ek.

Notice that if we have a $3$-colored set of points $S$ in the plane, it is possible that no line produces any non-trivial balanced partition of $S$. Consider for example an equilateral triangle $p_1p_2p_3$ and replace every vertex $p_i$ by a very small disk $D_i$ (so that no line can intersect the three disks), and place
$n$ red points, $n$ green points, and $n$ blue points, inside the disks $D_1$, $D_2$ and $D_3$, respectively. It is clear for this configuration that no line determines a halfplane containing exactly $k$ points of each color, for any value of $k$ with $0<k<n$.

However, it is easy to show that for every $3$-colored set of points $S$ in the plane there is a conic that simultaneously bisects the three colors: take the plane to be $z=0$ in $\mathbb{R}^3$, lift the points
vertically to the unit paraboloid $P$, use the 3-dimensional ham-sandwich theorem for splitting evenly the lifted point set with a plane $\Pi$, and use the projection of $P\cap\Pi$ as halving conic in $z=0$. On the other hand, instead of changing the partitioning object, one may impose some additional constraints on the point set. For example, Bereg and Kano have recently proved that if all vertices of the convex hull of $S$ have the same color, then there exists a nontrivial balanced line~\cite{Bereg2012}. This result was recently extended to sets of points in a space of higher dimension by Akopyan and Karasev \cite{Karasev2012}, where the constraint imposed on the set was also generalized.

\medskip

\noindent\textbf{Our contribution}. In this work we study several problems on balanced bipartitions of $3$-colored sets of points and lines in the plane.
In Section \ref{section:lines} we prove that for every 3-colored arrangement of lines, possibly unbalanced, there always exists a segment intersecting exactly one line of each color. If the number of lines of each color is exactly $2n$, we show that there is always a segment intersecting exactly $n$ lines of each color. The existence of balanced segments in 3-colored line arrangements is equivalent,  by duality, to the existence of balanced double wedges in $3$-colored point sets.  

In Section \ref{jordan} we consider balanced partitions on closed Jordan curves. Given $n$ red points, $n$ blue points and $n$ green points on any closed Jordan curve $\gamma$, we show that for every integer $k$ with $0 \leq k \leq n$ there is a pair of disjoint intervals on $\gamma$ whose union contains exactly $k$ points of each color.

In Section \ref{section:lattice} we focus on point sets in the integer plane lattice $\mathbb{Z}^2$; for simplicity, we will refer to $\mathbb{Z}^2$ as \emph{the lattice}.
We define an \emph{$L$-line} \emph{with }{\em corner} $q$ as the union of two different rays with common apex $q$, each of them being either vertical or horizontal. This {\em $L$-line} partitions the plane into two regions (Figure~\ref{fig:L-line}). If one of the rays is vertical and the other ray is horizontal, the regions are a quadrant with origin at $q$ and its complement. Note, however, that we allow an $L$-line to consist of two horizontal or two vertical rays with opposite direction, in which case the $L$-line is simply a horizontal or vertical line that splits the plane into two halfplanes. An $L$-line segment can be analogously defined using line segments instead of rays.

$L$-lines in the lattice play somehow a role comparable to the role of ordinary lines in the real plane. An example of this is the result
due to Uno et al.~\cite{Kano2009}, which extends the ham-sandwich theorem to the following scenario:
Given $n$ red points and $m$ blue points in general position in $\mathbb{Z}^2$, there always exists an $L$-line that bisects both sets of points. This result was also generalized by Bereg \cite{Bereg2009}; specifically he proved that for any integer $k\ge 2$ and for any $kn$ red points and $km$ blue points in general position in the plane, there
exists a subdivision of the plane into $k$ regions using at most $k$ horizontal segments and at most $k-1$ vertical segments such that every region contains $n$ red points and $m$ blue points. Several results on sets of points in $\mathbb{Z}^2$,  using $L$-lines or $L$-line segments are described in \cite{Kano2012}.

A set $S\subset \mathbb{R}^2$ is said to be {\em orthoconvex} if the intersection of $S$ with every horizontal or vertical line is connected.
The {\em orthogonal convex hull} of a set $S$ is the intersection of all connected orthogonally convex supersets of S.

Our main result in Section \ref{section:lattice} is in correspondence with the result of Bereg and Kano \cite{Bereg2012} mentioned above that if the convex hull of a $3$-colored point
set is monochromatic, then it admits some balanced line. Specifically, we prove here that
given
a set $S \subset \mathbb{Z}^2$ of $n$ red points, $n$ blue points and $n$ green points in general position (i.e., no two points are horizontally or vertically aligned), whose orthogonal convex hull is monochromatic,
then there is always an $L$-line that separates a region of the plane
containing exactly $k$ red points, $k$ blue points, and $k$ green points from $S$, for
some integer $k$ in the range $1\le k \le n-1$.

We conclude in Section \ref{section:conclusion} with some open problems and final remarks.

\section{3-colored line arrangements}\label{section:lines}

Let $\setL = R \cup G \cup B$ be a set of lines in the plane, such that $R, G$ and $B$ are pairwise disjoint. We refer to the elements of $R$, $G$, and $B$ as red, green, and blue, respectively. Let $\mathcal{A}(\setL)$ be the arrangement induced by the set $\setL$. We assume that $\mathcal{A}(\setL)$ is \emph{simple}, i.e., there are no parallel lines and no more than two lines intersect at one point. In Section~\ref{subsection:cells} we first prove that there always exists a face in $\mathcal{A}(\setL)$ that contains all three colors. We also  extend this result to higher dimensions. We say that a segment is \emph{balanced} with respect to $\setL$ if it intersects the same number of red, green and blue lines of $\setL$. In Section~\ref{subsection:doubleWedges} we prove that (i) there always exists a segment intersecting exactly one line of each color; and (ii) if the size of each set $R, G$ and $B$ is $2n$, there always exists a balanced segment intersecting $n$ lines of each color. As there are standard duality transformations between points and lines in which segments correspond to double wedges, the results in this section can be rephrased in terms of the existence of balanced double wedges for $3$-colored point sets.

\subsection{Cells in colored arrangements}\label{subsection:cells}

In this section we prove that there always exists a $3$-colored face in $\mathcal{A}(\setL)$, that is, a face that has at least one side of each color. In fact, we can show that a $d$-dimensional arrangement of $(d-1)$-dimensional hyperplanes, where each hyperplane is colored by one of $d+1$ colors (at least one of each color), must contain a $(d+1)$-colored cell. This result is tight with respect to the number of colors. If we have only $d$ colors, then every cell containing the intersection point of $d$ hyperplanes with different colors is $d$-colored. On the other hand, it is not difficult to construct examples of arrangements of hyperplanes with $d+2$ colors where no $(d+2)$-colored cell exists.

\begin{figure}[tb]
\centering
\includegraphics[scale=1.2]{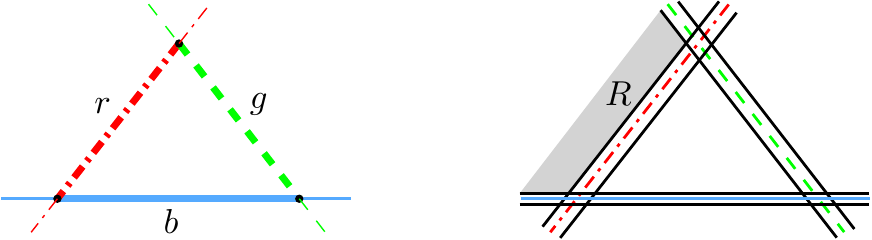}
\caption{
Construction of a line arrangement with no $4$-colored cell.}
\label{fig:counterexample}
\end{figure}
An example in the plane is shown in Figure~\ref{fig:counterexample}. Start with a triangle in which each side has a different color (red, green, or blue), and extend the sides to the colored lines $r$, $g$ and $b$ that support the sides (Figure~\ref{fig:counterexample}, left). Then shield each of these lines by two black parallel lines, one on each side (Figure~\ref{fig:counterexample}, right). Finally perturb the black lines in such a way that the arrangement becomes simple, yet the intersection points between former parallel lines are very far away. Now it is easy to see that no cell can contain all four colors. For example, depending on the specific intersections of the black lines with $g$ and $b$, the region $R$ may contain the colors green and blue, but cannot contain color red.

This example can be generalized to $d$-dimensional space. Start with a $d$-simplex in which each of the $d+1$ hyperplanes supporting a facet has a different color, $c_1,\dots,c_{d+1}$. Then shield each facet with two parallel hyperplanes having color $c_{d+2}$, one on each side, and perturb as above to obtain a simple arrangement in which no cell is $(d+2)$-colored.

For intuition's sake, before generalizing the result to higher dimensions, we first prove the result for $d=2$. Consider the $2$-dimensional arrangement $\mathcal{A}(\setL)$ as a graph. The dual of a face $f$ of $\mathcal{A}(\setL)$ is a face $\hat{f}$ that contains a vertex for every bounding line of $f$, and contains an edge between two vertices of $\hat{f}$ if and only if the intersection of the corresponding lines is part of the boundary of $f$ (see Figure~\ref{fig:ColorArrangement}(a)).

\begin{figure}
\centering
\includegraphics[scale=1.2]{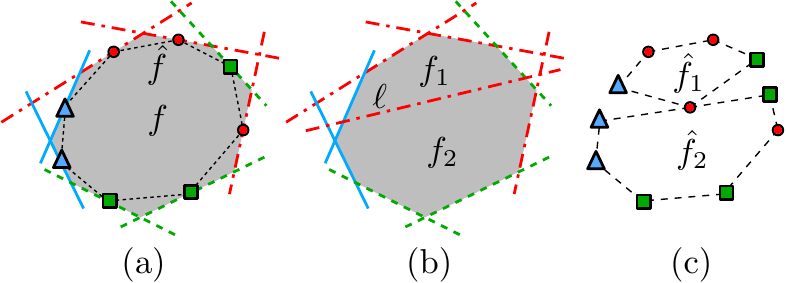}
\caption{
(a) A (complete) face $f$ and its dual $\hat{f}$ (dashed).
(b) Face $f$ is split into $f_1$ and $f_2$.
(c) The dual $\hat{f}$ is split into $\hat{f}_1$ and $\hat{f}_2$.}
\label{fig:ColorArrangement}
\end{figure}

Let  $C$  be a simple cycle of vertices where each vertex is colored either red, green, or blue. Let $n_r(C), n_g(C), $ and $n_b(C)$ be the number of red, green, and blue vertices of $C$, respectively. We simply write $n_r$, $n_g$, and $n_b$ if $C$ is clear from the context. The \emph{type} of an edge of $C$ is the multiset of the colors of its vertices. Let $n_{rr}$, $n_{gg}$, $n_{bb}$, $n_{rg}$, $n_{rb}$, and $n_{gb}$ be the number of edges of the corresponding type. Note that, if $f$ is bounded, then $\hat{f}$ is a simple cycle, where each vertex is colored either red, green, or blue. We say a bounded face $f$ is \emph{complete} if $n_{rg} \equiv n_{rb} \equiv n_{gb} \equiv 1 \pmod{2}$ holds for $\hat{f}$.

\begin{lemma}
\label{lem:3colorcycle}
Consider a simple cycle in which each vertex is colored either red, green, or blue. Then $n_{rg} \equiv n_{rb} \equiv n_{gb} \pmod{2}$.
\end{lemma}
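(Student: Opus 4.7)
The plan is to prove the three congruences by a simple degree-counting (handshaking) argument applied to the cycle, viewed as a 2-regular graph whose vertices are colored.

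First I would fix one color, say red, and count the total number of edge-endpoints of color red in two ways. Since the cycle is 2-regular, every red vertex contributes exactly $2$, so the total is $2n_r$, which is even. On the other hand, each edge of type $rr$ contributes $2$ red endpoints, each edge of type $rg$ or $rb$ contributes $1$ red endpoint, and edges of type $gg$, $bb$, $gb$ contribute $0$. Equating gives
\[
2n_{rr} + n_{rg} + n_{rb} \;=\; 2n_r,
\]
so $n_{rg} + n_{rb} \equiv 0 \pmod{2}$, i.e.\ $n_{rg} \equiv n_{rb} \pmod{2}$.

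Next I would repeat the exact same argument with green (counting green endpoints) to obtain $n_{rg} + n_{gb} \equiv 0 \pmod{2}$, and hence $n_{rg} \equiv n_{gb} \pmod{2}$. Chaining these two congruences yields $n_{rg} \equiv n_{rb} \equiv n_{gb} \pmod{2}$, as required. (A third application with blue gives the remaining identity $n_{rb} \equiv n_{gb} \pmod 2$, which is already implied, providing a consistency check.)

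There is essentially no obstacle here: the argument is a direct consequence of the handshake lemma applied color by color, and the only thing to be careful about is the bookkeeping of how each edge type contributes to the count of endpoints of a given color. No use is made of planarity or of the geometric origin of the cycle, so the lemma holds for any simple cycle with vertices $3$-colored, exactly as stated.
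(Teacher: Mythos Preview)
Your argument is correct and is essentially identical to the paper's own proof: both double-count the endpoints of a fixed color to obtain $2n_r = 2n_{rr} + n_{rg} + n_{rb}$, deduce $n_{rg} \equiv n_{rb} \pmod{2}$, and then repeat for the other colors.
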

\begin{proof}
The result follows from double counting. We can obtain an expression for (twice) the number of vertices of a certain color by summing up over all edges that have vertices of that color. 
For instance, for $n_r$ we get the equation $2 n_r = 2 n_{rr} + n_{rg} + n_{rb}$. This directly implies that $n_{rg} \equiv n_{rb} \pmod{2}$. By repeating the same process for the other colors we obtain the claimed result.
\end{proof}

\begin{theorem}
\label{thm:3ColoredFace}
Let $\setL$ be a set of 3-colored lines in the plane inducing a simple arrangement $\mathcal{A}(\setL)$, such that each color appears at least once. Then there exists a complete face in $\mathcal{A}(\setL)$.
\end{theorem}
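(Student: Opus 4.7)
My plan is to prove Theorem~\ref{thm:3ColoredFace} by induction on $|\setL|$, adding lines one at a time. For the base case $|\setL|=3$, the three lines (one of each color) form a triangle whose dual cycle has three vertices, one of each color, and three edges of types $rg$, $rb$, $gb$; hence $n_{rg}=n_{rb}=n_{gb}=1$ and the bounded triangle is a complete face.

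For the inductive step $|\setL|\ge 4$, since only three colors are used some color has at least two lines, so I can pick a line $\ell\in\setL$ whose removal leaves an arrangement $\mathcal{A}(\setL')$ that still contains all three colors. By the inductive hypothesis $\mathcal{A}(\setL')$ has a complete face $f'$. If $\ell$ avoids the interior of $f'$, then $f'$ remains a face of $\mathcal{A}(\setL)$ with the same boundary and dual, so it is still complete. Otherwise $\ell$ enters and leaves $f'$ in the interiors of two boundary edges (simplicity of the arrangement rules out passing through a vertex), lying on two lines $L_1,L_2\in\setL'$, and splits $f'$ into bounded faces $f_1$ and $f_2$, as in Figure~\ref{fig:ColorArrangement}.

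The crux is to track the dual cycle under this split. The cycle $\hat{f'}$ is cut at the two vertices $L_1$ and $L_2$ into two arcs, and $\hat{f_1}$ (resp.\ $\hat{f_2}$) consists of one arc together with a new vertex $\ell$ joined to $L_1$ and $L_2$ by two new edges of types $c(L_1)c(\ell)$ and $c(L_2)c(\ell)$. Since the same two new edges appear in both sub-cycles they cancel modulo $2$, yielding
\[
n_{XY}(\hat{f_1})+n_{XY}(\hat{f_2})\equiv n_{XY}(\hat{f'})\pmod 2
\]
for every pair of colors $XY\in\{rg,rb,gb\}$. Because $f'$ is complete, $n_{rg}(\hat{f'})$ is odd, so exactly one of $n_{rg}(\hat{f_1}),n_{rg}(\hat{f_2})$ is odd, and Lemma~\ref{lem:3colorcycle} then forces the corresponding face to have all three parities odd, i.e.\ to be complete. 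The main obstacle is to verify the displayed identity carefully, showing that the local surgery on the dual cycle genuinely adds the same two new edges to both sides; once that is done, the induction closes with no further work.
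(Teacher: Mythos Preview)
Your proof is correct and follows essentially the same inductive strategy as the paper: maintain a complete face while adding (or, in your formulation, removing) one line at a time, and when the line splits the face, use a parity count on bichromatic edge types together with Lemma~\ref{lem:3colorcycle} to conclude that one of the two pieces is still complete. The only cosmetic difference is that the paper singles out the edge type \emph{not} involving the color of $\ell$ and observes an exact equality $n_{gb}(\hat f)=n_{gb}(\hat f_1)+n_{gb}(\hat f_2)$, whereas you note the slightly more general fact that the two new edges $c(L_1)c(\ell)$ and $c(L_2)c(\ell)$ occur in \emph{both} sub-cycles and hence cancel modulo~$2$ for every type; either observation suffices.
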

\begin{proof}
The result clearly holds if $|R|=|G|=|B|=1$.
For the general case, we start with one line of each color, and then incrementally add the remaining lines, maintaining a complete face $f$ at all times.
Without loss of generality, assume that a red line $\ell$ is inserted into $\mathcal{A}(\setL)$. If $\ell$ does not cross $f$, we keep $f$. Otherwise, $f$ is split into two faces $f_1$ and $f_2$ (see Figure~\ref{fig:ColorArrangement}(b)). Similarly, $\hat{f}$ is split into $\hat{f}_1$ and $\hat{f}_2$ (with the addition of one red vertex, see Figure~\ref{fig:ColorArrangement}(c)). Because $\ell$ is red, the number $n_{gb}$ of green-blue edges does not change, that is, $n_{gb}(\hat{f}) = n_{gb}(\hat{f}_1) + n_{gb}(\hat{f}_2)$. This implies that either $n_{gb}(\hat{f}_1)$ or $n_{gb}(\hat{f}_2)$ is odd. By Lemma~\ref{lem:3colorcycle} it follows that either $f_1$ or $f_2$ is complete.
\end{proof}

We now extend the result to higher dimensions. For convenience we assume that every hyperplane is colored with a ``color'' in $[d] = \{0, 1, \ldots, d\}$. Consider a triangulation $T$ of the surface of the $(d-1)$-dimensional sphere $\Sphere^{d-1}$, where every vertex is colored with a color in $[d]$. Note that a triangulation of $\Sphere^1$ is exactly a simple cycle. As before, we define the \emph{type} of a simplex (or face) of $T$ as the multiset $S$ of the colors of its vertices. Furthermore, let $n_S$ be the number of simplices (faces) with type $S$. We say a type $S$ is \emph{good} if $S$ does not contain duplicates and $|S| = d$. The following analogue of Lemma~\ref{lem:3colorcycle} is similar to Sperner's lemma~\cite{Sperner96}.

\begin{lemma}
\label{lem:dColorTriang}
Consider a triangulation $T$ of $\Sphere^{d-1}$, where each vertex is colored with a color in $[d]$. Then either $n_S \equiv 0 \pmod{2}$ for all good types $S$, or $n_S \equiv 1 \pmod{2}$ for all good types $S$.
\end{lemma}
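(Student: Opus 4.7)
The plan is to lift the double-counting idea behind Lemma~\ref{lem:3colorcycle} from simple cycles on $\mathbb{S}^1$ to triangulations of $\mathbb{S}^{d-1}$. For each color $k \in [d]$, write $T_k = [d] \setminus \{k\}$ for the unique good type missing color $k$; since the $d+1$ good types are precisely these $T_k$'s, the statement is equivalent to showing that $n_{T_a} \equiv n_{T_b} \pmod{2}$ for every pair of distinct colors $a, b \in [d]$.

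To establish this pairwise congruence I fix such a pair $(a,b)$, set $U = [d] \setminus \{a,b\}$ (a set of $d-1$ colors), and double-count the quantity $N$ defined as the number of incident pairs $(\sigma, \tau)$ where $\sigma$ is a $(d-1)$-simplex of $T$ and $\tau$ is a $(d-2)$-face of $\sigma$ whose vertex colors are distinct and together form exactly $U$. On the one hand, $T$ triangulates a closed sphere without boundary, so every $(d-2)$-face of $T$ lies in exactly two top-dimensional simplices; this yields $N = 2 f_U$, where $f_U$ counts the ``$U$-rainbow'' $(d-2)$-faces, so $N$ is even.

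On the other hand I enumerate by $\sigma$. For $\sigma$ to contribute at all, its color multiset must contain every element of $U$ at least once, and since $\sigma$ has exactly $d$ vertices, only two possibilities remain: either the extra vertex carries a color outside $U$ (so $\sigma$ has good type $T_a$ or $T_b$ and contributes exactly one facet of color set $U$), or the extra vertex duplicates some color $c \in U$ (so $\sigma$ contributes exactly two such facets, one for each copy of $c$). Summing over $\sigma$ gives $N \equiv n_{T_a} + n_{T_b} \pmod{2}$, and combining with the first count forces $n_{T_a} \equiv n_{T_b} \pmod{2}$. Iterating over all pairs $(a,b)$ shows that the $n_{T_k}$ all share the same parity, which is exactly the dichotomy in the statement.

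The main obstacle is less conceptual than bookkeeping: making sure the case analysis on the color multiset of $\sigma$ is exhaustive (in particular, that no contribution involving triple or further-repeated colors in $U$ is overlooked) and that the ``each codimension-one face lies in exactly two maximal simplices'' property of triangulations of spheres is invoked cleanly. Once these are in place the proof mirrors the $d=2$ argument from Lemma~\ref{lem:3colorcycle} and the analogy with Sperner's lemma becomes transparent.
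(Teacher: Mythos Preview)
Your proof is correct and is essentially the same double-counting argument as the paper's: your set $U=[d]\setminus\{a,b\}$ is exactly the paper's $(d-1)$-subset $S$, your $T_a,T_b$ are its two good extensions $S_1,S_2$, and your identity $2f_U=n_{T_a}+n_{T_b}+2\sum_{c\in U}n_{U\uplus\{c\}}$ is precisely the displayed equation in the paper. The only difference is cosmetic---you phrase the count via incident pairs $(\sigma,\tau)$ rather than jumping straight to the equation---so the approaches coincide.
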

\begin{proof}
We again use double counting, following the ideas in the proof of Lemma~\ref{lem:3colorcycle}. Consider a subset $S \subset [d]$ with $|S| = d-1$ (thus $S$ includes all but two colors from $[d]$). 
Recall that good types consist of $d$ different colors, so there are exactly two good types $S_1$ and $S_2$ that contain $S$: one with each of the two colors of $[d]$ that are not already present in $S$. 

Since every $(d-2)$-dimensional face of $T$ is present in exactly two $(d-1)$-dimensional simplices, we can obtain an expression for (twice) $n_S$ by summing over all face types that contain $S$:
\[
2 n_S = n_{S_1} + n_{S_2} + 2 \sum_{x \in S} n_{S \uplus \{x\}},
\]

where the summation on the right is done over the types that contain $S$ but are not good (the symbol $\uplus$ denotes the \emph{disjoint union}, to allow duplicated colors in the face type).

From the above equation we obtain that $n_{S_1} \equiv n_{S_2} \pmod{2}$. By repeating this procedure for every set $S$, we obtain the claimed result.
\end{proof}

Consider a cell $f$ of a $d$-dimensional arrangement of $(d-1)$-dimensional hyperplanes. The dual $\hat{f}$ of $f$ contains a vertex for every bounding hyperplane of $f$, and contains a simplex on a set of vertices of $\hat{f}$ if and only if the intersection of the corresponding hyperplanes is part of the boundary of $f$. Note that, if $f$ is bounded, then $\hat{f}$ is a triangulation of $\Sphere^{d-1}$. We say a bounded cell $f$ is \emph{complete} if $n_S(\hat{f}) \equiv 1 \pmod{2}$ for all good types $S$. Note that a complete cell is $(d+1)$-colored.

\begin{theorem}
\label{thm:dColoredCell}
Let $\setL$ be a set of $(d+1)$-colored hyperplanes in $\Reals^d$ inducing a simple arrangement $\mathcal{A}(\setL)$, such that each color appears at least once. Then there exists a complete face in $\mathcal{A}(\setL)$.
\end{theorem}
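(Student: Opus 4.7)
My plan is to mirror the incremental proof of Theorem~\ref{thm:3ColoredFace}, replacing Lemma~\ref{lem:3colorcycle} by its higher-dimensional analogue Lemma~\ref{lem:dColorTriang}. I would induct on the number of hyperplanes in $\setL$: start with $d+1$ hyperplanes, one of each color, and then insert the remaining hyperplanes one at a time, maintaining a complete bounded cell $f$ throughout.

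For the base case, $d+1$ hyperplanes in general position in $\Reals^d$ bound exactly one bounded cell, a $d$-simplex. Its dual $\hat f$ is the boundary of a $d$-simplex: a triangulation of $\Sphere^{d-1}$ whose $d+1$ vertices carry the $d+1$ distinct colors and whose $d+1$ facets each miss one vertex. So each facet realizes a different good type, $n_S(\hat f)=1$ for every good $S$, and $f$ is complete.

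For the inductive step I would insert a new hyperplane $h$ of some color $c$. If $h$ avoids $f$, the cell survives and nothing needs to be checked. Otherwise $h$ cuts $f$ into two bounded cells $f_1, f_2$, and at the dual level $\hat f$ is broken into two triangulations $\hat f_1, \hat f_2$ of $\Sphere^{d-1}$, each containing a fresh vertex $v_h$ of color $c$ dual to $h$. The crucial observation is to focus on the unique good type $S \subset [d]$ that omits color $c$: no $(d-1)$-simplex of type $S$ can involve $v_h$, so the simplices of type $S$ present in $\hat f$ are partitioned cleanly between the two sides, giving
\[
n_S(\hat f) \;=\; n_S(\hat f_1) + n_S(\hat f_2).
\]
Since $f$ was complete, $n_S(\hat f)$ is odd, and hence exactly one of $n_S(\hat f_1), n_S(\hat f_2)$ is odd. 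Applying Lemma~\ref{lem:dColorTriang} to the corresponding triangulation upgrades ``one good type odd'' to ``every good type odd'', so the matching sub-cell $f_i$ is complete and the induction continues.

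The main obstacle I anticipate is not the parity bookkeeping but rather setting up the dual picture cleanly in arbitrary dimension: one must verify that a bounded cell of a simple arrangement in $\Reals^d$ is a simple polytope whose dual boundary is genuinely a simplicial triangulation of $\Sphere^{d-1}$, and that cutting such a cell by a generic hyperplane corresponds at the dual level to inserting a single new vertex on each side of the split. Once this correspondence is granted, tracking simplices of the good type that omits color $c$ and invoking Lemma~\ref{lem:dColorTriang} closes the argument in direct analogy with Theorem~\ref{thm:3ColoredFace}.
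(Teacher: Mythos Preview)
Your proposal is correct and follows essentially the same route as the paper's proof: induct on the number of hyperplanes, start from the single bounded simplex cell given by one hyperplane of each color, and at each insertion of a hyperplane of color $c$ use the additivity of $n_S$ for the good type $S=[d]\setminus\{c\}$ together with Lemma~\ref{lem:dColorTriang} to conclude that one of the two sub-cells is complete. Your base case is spelled out more explicitly than in the paper (which only remarks that it follows by induction on $d$), and your caveat about needing the dual of a bounded simple cell to be a triangulation of $\Sphere^{d-1}$ is a point the paper takes for granted.
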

\begin{proof}
The proof is analogous to the two-dimensional case: if there is exactly one hyperplane of each color, then the arrangement has exactly one bounded face $f$, which must be complete (this can be easily shown by induction on $d$). We maintain a complete face $f$ during successive insertions of hyperplanes. Assume we add a hyperplane $H$ with color $x$. If $H$ does not cross $f$, we can simply keep $f$. Otherwise, $f$ is split into two faces $f_1$ and $f_2$, and $\hat{f}$ is split into $\hat{f}_1$ and $\hat{f}_2$ (with the addition of one vertex of color $x$). Let $S = [d] - \{x\}$. Because $H$ has color $x$, the number of simplices with type $S$ does not change, that is, $n_S(\hat{f}) = n_S(\hat{f}_1) + n_S(\hat{f}_2)$. This implies that either $n_S(\hat{f}_1)$ or $n_S(\hat{f}_2)$ is odd. By Lemma~\ref{lem:dColorTriang} it follows that either $f_1$ or $f_2$ is complete.
\end{proof}

An immediate consequence of Theorem~\ref{thm:dColoredCell} is the following result:

\begin{corollary}
\label{cor:segment111}
Let $\setL$ be a $(d+1)$-colored set of hyperplanes in $\Reals^d$ inducing a simple arrangement $\mathcal{A}(\setL)$, such that each color appears at least once.
Then there exists a segment intersecting exactly one hyperplane of each color.
\end{corollary}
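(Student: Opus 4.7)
My plan is to apply Theorem~\ref{thm:dColoredCell} to extract a complete (bounded) cell $f$, and then build the desired segment by passing through one carefully chosen vertex of $f$ together with a facet of $f$ of the ``missing'' color. The intuition is that, since every good type $S$ appears an odd (hence positive) number of times in $\hat f$, the cell $f$ must have a vertex at which the right combination of colors meet, and the segment can be localized around this vertex and the chosen facet so that no stray hyperplanes are crossed.

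Concretely, I would fix an arbitrary color $c_0 \in [d]$ and use completeness to obtain a $(d-1)$-simplex of $\hat f$ of good type $S_0 = [d] \setminus \{c_0\}$; such a simplex exists because $n_{S_0}(\hat f) \equiv 1 \pmod 2$. Dualizing, it corresponds to a vertex $v$ of $f$ at which exactly $d$ hyperplanes $H_1, \ldots, H_d$ meet, one of each color in $S_0$. Since a complete cell is $(d+1)$-colored, $f$ also has a facet $F_0$ of the missing color $c_0$, supported on some hyperplane $H_0$; simplicity of $\mathcal{A}(\setL)$ forces $H_0 \notin \{H_1, \ldots, H_d\}$ and $v \notin H_0$. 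I would then pick a point $q$ in the relative interior of $F_0$ and let $\ell$ be the line through $v$ and $q$.

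The sought segment is $[a,b] \subset \ell$, where $a$ is chosen slightly on the far side of $v$ from $q$, and $b$ slightly on the far side of $q$ from $v$. Moving along $\ell$ from $a$ to $b$, one passes through $v$ (simultaneously crossing $H_1, \ldots, H_d$) and through $q$ (crossing $H_0$), yielding one intersection of each of the $d+1$ colors. The open segment $(v,q)$ lies inside the open cell $f$, because $v$ and $q$ lie on disjoint bounding hyperplanes of $\bar f$ (namely $\{H_1,\ldots,H_d\}$ and $\{H_0\}$, respectively), so no hyperplane of $\setL$ is crossed between $v$ and $q$.

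The main point to verify, and the only real obstacle, is that no extraneous hyperplane is crossed on the small pieces $[a,v]$ and $[q,b]$. Any $H' \in \setL \setminus \{H_0, \ldots, H_d\}$ misses both $v$ (by simplicity of the arrangement) and $q$ (since $q$ lies in the relative interior of $F_0$), and the segment $[v,q] \subset \bar f$ stays on a single side of $H'$; hence by choosing $a$ and $b$ close enough to $v$ and $q$ respectively, the whole of $[a,b]$ stays on that same side of every such $H'$ apart from the intended crossings at $v$ and $q$. If one prefers to avoid the $d$ simultaneous crossings at $v$, a small perturbation of $v$ into the interior of $f$ splits them into $d$ nearby transverse crossings, but this is cosmetic; in either case the resulting segment intersects exactly one hyperplane of each of the $d+1$ colors.
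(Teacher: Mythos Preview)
Your proof is correct and follows essentially the same approach as the paper's: use Theorem~\ref{thm:dColoredCell} to obtain a complete cell, locate a vertex where $d$ hyperplanes of distinct colors meet (via a good-type simplex in the dual), connect it by a segment to a facet of the missing color, and slightly extend/perturb. Your write-up is in fact more careful than the paper's, which simply asserts that the complete cell ``must also contain an intersection of $d$ hyperplanes with different colors'' and that perturbing and extending the segment works, without spelling out why no extraneous hyperplane is crossed.
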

\begin{proof}
Consider a $(d+1)$-colored cell. By Theorem~\ref{thm:dColoredCell} such a cell must exist and it must also contain an intersection of $d$ hyperplanes with different colors. Now we can take the segment from this intersection to a face of the remaining color (in the same cell). By perturbing and slightly extending this segment, we obtain a segment properly intersecting exactly one hyperplane of each color.
\end{proof}

\subsection{$3$-colored point sets and balanced double wedges}\label{subsection:doubleWedges}

We now return to the plane and consider $3$-colored point sets. By using the point-plane duality, Corollary~\ref{cor:segment111} implies the following result.
\begin{theorem}\label{thm:doubleWedge111}
Let $S$ be a 3-colored set of points in $\Reals^2$ in general position, such that each color appears at least once.
Then there exists a double wedge that contains exactly one point of each color from $S$.
\end{theorem}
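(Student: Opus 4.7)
The plan is to apply the standard point--line duality in the plane and reduce the statement to Corollary~\ref{cor:segment111} in dimension $d=2$. Recall the duality that sends a point $p=(a,b)$ to the line $p^*\colon y = ax - b$ and a non-vertical line $\ell\colon y = mx+c$ to the point $\ell^* = (m,-c)$. This duality preserves incidences and the above/below relation, and in particular two primal points $u,v$ lie on opposite sides of a primal line $\ell$ if and only if the straight segment $\overline{u^*v^*}$ in the dual crosses the dual line $\ell^*$.

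First I would translate the conclusion of the theorem across this duality. A double wedge in the primal is the union of the two opposite sectors determined by two crossing lines $\ell_1,\ell_2$, and a primal point $p$ lies in this double wedge exactly when $p$ is above one of $\ell_1,\ell_2$ and below the other. By the property just recalled, this happens if and only if the segment joining $\ell_1^*$ and $\ell_2^*$ in the dual crosses the line $p^*$. Hence a double wedge in the primal that contains exactly one point of each color from $S$ corresponds, in the dual, to a straight segment that crosses exactly one dual line of each color.

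Next I would invoke Corollary~\ref{cor:segment111} with $d=2$ applied to the dual arrangement $\mathcal{A}(S^*)$. Since $S$ is 3-colored with each color appearing at least once, the same holds for $S^*$; and since $S$ is in general position (no two points coincide and no three are collinear), the arrangement $\mathcal{A}(S^*)$ is simple. The corollary then produces a segment that crosses exactly one line of each color in $\mathcal{A}(S^*)$, which under the correspondence described above is precisely a double wedge in the primal containing one point of $S$ of each color.

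The only real obstacles are small bookkeeping issues: one should check that the "general position" hypothesis in the primal actually translates into a simple arrangement in the dual (it does, as pairwise distinctness of points becomes non-parallelism and non-collinearity becomes non-concurrence), and, if one wishes to be pedantic about vertical lines for which the standard duality is undefined, one can pre-rotate $S$ by a generic angle so that no two primal points share an $x$-coordinate. Both are routine, so the substantive content of the argument is already contained in Corollary~\ref{cor:segment111}.
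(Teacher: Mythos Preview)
Your proof is correct and follows essentially the same approach as the paper: apply the standard point--line duality to $S$, invoke Corollary~\ref{cor:segment111} in the dual to obtain a segment meeting one line of each color, and translate it back to a double wedge in the primal. Your write-up is in fact more careful than the paper's, spelling out why segments dualize to double wedges and why general position yields a simple dual arrangement.
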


\begin{proof}
We apply the standard duality transformation between points and non-vertical lines where a point $p=(a,b)$ is mapped to a line $p'$ with equation $y=ax-b$, and vice versa~\cite{deberg}. By Corollary~\ref{cor:segment111}, there exists a segment $w$ that intersects exactly one line of each color. By standard point-line duality properties, the dual of $w$ is a double wedge $w'$ that contains the dual points of the intersected lines.
\end{proof}

Since the dual result extends to higher dimensions, so does the primal one. The equivalent statement says that given a set of points colored with $d+1$ colors in $\Reals^d$, there exists a {\em pencil} (i.e., a collection of hyperplanes sharing an affine subspace of dimension $d-2$) containing exactly one point of each color. 

Next we turn our attention to balanced $3$-colored point sets, and prove a ham-sandwich-like theorem for double wedges.

\begin{theorem}\label{th:bisectingdoubleWedge}
Let $S$ be a 3-colored balanced set of $6n$ points in $\Reals^2$ in general position.
Then there exists a double wedge that contains exactly $n$ points of each color from $S$.
\end{theorem}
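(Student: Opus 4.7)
The plan is to combine the ham-sandwich theorem with a topological argument in a two-parameter family of double wedges. I would begin by applying the ham-sandwich theorem to the red and green points of $S$, producing a line $\ell_0$, which I take to be horizontal, with exactly $n$ red and $n$ green points in each open half-plane. Let $b$ denote the number of blue points above $\ell_0$.

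Within the family of double wedges having $\ell_0$ as one bounding line, I parametrize the second line $\ell'$ by its apex $p\in\ell_0$ and its orientation $\phi\in[0,2\pi)$; write $D_1(p,\phi)$ for the double wedge consisting of the upper-left and lower-right sectors, and define the per-color imbalance $h_c(p,\phi):=|D_1\cap S_c|-n$. The crucial antipodal symmetry is that rotating $\phi$ by $\pi$ leaves the underlying line $\ell'$ unchanged but swaps $D_1$ and $D_2$, whence $h_c(p,\phi+\pi)=-h_c(p,\phi)$ for every color $c$. Moreover, because $\ell_0$ already bisects the red (resp.\ green) points, the condition $h_R=0$ reduces to the statement that $\ell'$ separates the red points symmetrically across $\ell_0$, that is, the number of reds in the upper-left sector equals the number of reds in the lower-left sector; analogously for green.

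From here the plan proceeds in two stages. First, I would show that the locus $\gamma_{RG}=\{(p,\phi):h_R(p,\phi)=h_G(p,\phi)=0\}$ contains a closed curve in the parameter cylinder $\ell_0\times\mathbb{S}^1$ that is invariant under the antipodal map $\phi\mapsto\phi+\pi$. Because $h_R$ and $h_G$ are piecewise-constant integer-valued step functions, this requires a careful combinatorial-sweep argument, similar in spirit to the parity analysis in Lemma~\ref{lem:3colorcycle}, that tracks the joint red-green balance as $(p,\phi)$ varies; the extremal cases $p\to\pm\infty$ give $h_R=h_G=0$ automatically, which anchors the curve at infinity. Second, I would restrict $h_B$ to $\gamma_{RG}$: by the antipodal symmetry, $h_B$ must change sign as one traverses $\gamma_{RG}$ between the identified parameters $\phi$ and $\phi+\pi$, and since $h_B$ is integer-valued and changes by $\pm1$ only when $\ell'$ crosses a blue point, a discrete intermediate-value argument produces some $(p^*,\phi^*)\in\gamma_{RG}$ with $h_B(p^*,\phi^*)=0$. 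The double wedge $D_1(p^*,\phi^*)$ is then the desired balanced one.

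I expect the main obstacle to be the rigorous construction and topology of the red-green-balanced curve $\gamma_{RG}$, and the careful accounting of the $\pm1$ jumps as $\ell'$ sweeps over the points of $S$. In particular, one must exclude the trivial solutions arising as $p\to\pm\infty$, where $\ell'$ recedes to infinity and $D_1$ degenerates into a half-plane, and handle the subcase $b=n$ (where $\ell_0$ already triply bisects $S$) by producing a genuine non-degenerate balanced double wedge via a direct construction.
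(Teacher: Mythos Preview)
Your proposal has a structural gap stemming from a dimension mismatch. By fixing one bounding line of the double wedge to be the red--green ham-sandwich line $\ell_0$, you search a two-parameter family (apex $p\in\ell_0$, angle $\phi$) for a configuration satisfying three constraints $h_R=h_G=h_B=0$. These constraints are independent: their sum equals $|D_1\cap S|-3n$, which is not forced to vanish, so knowing two of them does not determine the third. An overdetermined system of this type need not have a solution, and there is no reason to expect that a balanced double wedge having $\ell_0$ as one of its lines always exists. The paper sidesteps exactly this issue by parametrizing double wedges via an apex on an auxiliary vertical line together with an interval of \emph{exactly $3n$ consecutive points} in the angular order around the apex; the size-$3n$ condition makes the red count automatic once the blue and green counts are each $n$, so only two conditions remain in a two-parameter family, and a winding-number argument (tracking a centrally symmetric lattice curve as the apex slides from $+\infty$ to $-\infty$) finishes the proof.

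Concretely, your locus $\gamma_{RG}$ is not naturally a curve. It is the common zero set of two integer-valued step functions on a $2$-dimensional cylinder, hence a (possibly empty) union of open $2$-cells. Nothing in the proposal forces $\gamma_{RG}$ to contain a nontrivial component away from the degenerate ends $p\to\pm\infty$, nor---if it does---forces such a component to be \emph{invariant} under $\phi\mapsto\phi+\pi$ rather than merely exchanged with another component. In the latter case $h_B$ could be strictly positive on one component and strictly negative on its image, and your sign-change argument collapses. The ``main obstacle'' you flag is therefore not a technicality but the crux of the problem, and the ham-sandwich preprocessing on two colors does not supply enough structure to close it.
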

\begin{proof}
We call a double wedge satisfying the theorem {\em bisecting}. Without loss of generality we assume that the points of $S$ have distinct $x$-coordinates and distinct $y$-coordinates. For two distinct points $a$ and $b$ in the plane, let $\ell(a,b)$ denote the line passing through them. Consider the arrangement $\mathcal{A}$ of all the lines passing through two points from $S$, i.e.
$$\mathcal{A}=\{ \ell(p_i,p_j)~|~p_i,p_j\in S, i\ne j\} .$$

\begin{figure}
\begin{center}
  \includegraphics{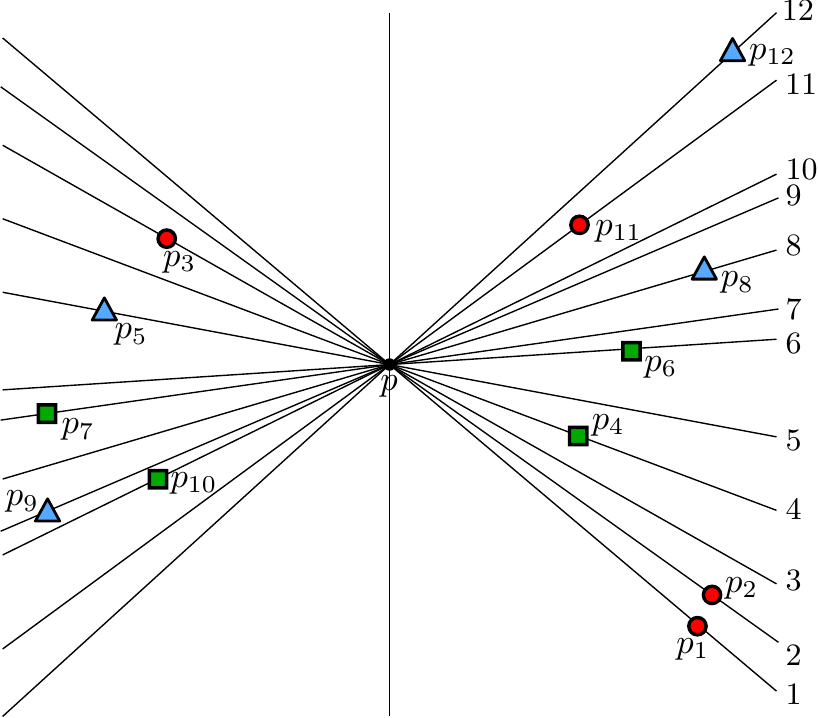}\\
  \caption{$\sigma_p$: ordering of $S$ based on the slopes of lines through $p$.} \label{ordering}
\end{center}
\end{figure}

Consider a vertical line $\ell$ that does not contain any point from $S$. We continuously walk on $\ell$ from $y=+\infty$ to $y=-\infty$. For any point $p\in \ell$ we define an ordering $\sigma_p$ of $S$ as follows: consider the lines $\ell(p,q),q\in S$ and sort them by (increasing values) of slope. Let $(p_1,\ldots, p_{6n})$ be the obtained ordering (see Figure~\ref{ordering}). 
 
By construction, any consecutive interval $\{p_i,p_{i+1},\dots,p_j\}$ of an ordering of $p$ corresponds to a set of lines whose points can be covered by a double wedge with apex at $p$ (even if the indices are taken modulo $6n$).
Likewise, for any $p\in\Reals^2$, any double wedge with apex at $p$ will appear as an interval in the ordering $\sigma_p$.

Given an ordering
$\sigma_p=(p_1, \ldots, p_{6n})$ of $S$, we construct a polygonal curve as follows: for every $k \in \{1, 2,\ldots, 6n\}$ let $b_k$ and $g_k$ be the number of blue and green points in the set
$S(p,k)=\{p_k,p_{k+1},\ldots,p_{k+3n-1}\}$ of $3n$ points, respectively.
We define the corresponding lattice point $q_k := (b_k -n, g_k-n)$, and the  polygonal curve
$\phi(\sigma)=(q_1,\ldots, q_{6n},-q_1,\ldots,-q_{6n},q_1)$.

\begin{figure}[tb]
\begin{center}
  \includegraphics{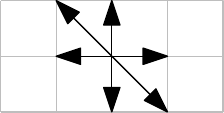}\\
  \caption{The seven types of segments ${q_{k-1}q_{k}}$ (including the segment of length 0 if $p_{k-1}$ and $p_k$ have the same color) depending on the color of $p_{k-1}$ and $p_k$.} \label{vector2}
\end{center}
\end{figure}

Intuitively speaking, the point $q_k$ indicates how balanced the interval is that starts with point $p_k$ and contains $3n$ points. By construction, if $q_k = (0,0)$ for some $p \in \ell$ and some $k\leq 6n$, then the associated wedge is balanced (and {\em vice versa}).
In the following we show that this property must hold for some $k \in \{1, \ldots, 6n\}$ and $p\in\mathbb{R}^2$.  We observe several important properties of $\phi(\sigma)$:

\begin{enumerate}
\item Path $\phi(\sigma)$ is centrally symmetric (w.r.t. the origin). This follows from the definition of $\phi$.

\item\label{noedge} Path $\phi(\sigma)$ is a closed curve. Moreover, the interior of any edge $e_i=q_iq_{i+1}$ of $\phi(\sigma)$ cannot contain the origin: consider the segment between two consecutive vertices $q_{k-1}$ and $q_k$ of $\phi$. Observe that the double wedges associated to $q_{k-1}$ and $q_k$ share $3k-1$ points. Thus, the orientation and length of the segment $q_{k-1}q_k$ only depend on the color of the two points that are not shared. In particular, there are only $7$ types of such segments in $\phi(\sigma)$, see Figure~\ref{vector2}. Since these segments do not pass through grid points, the origin cannot appear at the interior of a segment.

\item\label{nointerior} If the orderings of two points $p$ and $p'$ are equal, then their paths $\phi(\sigma_p)$ and $\phi(\sigma_{p'})$ are equal. If the orderings $\sigma_{p}$ and $\sigma_{p'}$ are not equal then either $(i)$ there is a line of $\mathcal{A}$ separating  $p$ and $p'$ or $(ii)$ there is a point $p_i\in S$ such that the vertical line passing through $p_i$ separates $p$ and $p'$.

In our proof we will move $p$ along a vertical line, so case $(ii)$ will never occur. Consider a continuous vertical movement of $p$, and consider the two orderings $\pi_1$ and $\pi_2$ before and after a line of $\mathcal{A}$ is crossed. Observe that the only difference between the two orderings is that two consecutive points (say, $p_i$ and $p_{i+1}$) of $\pi_1$ are reversed in $\pi_2$. Thus, the only difference between the two associated
$\phi$ curves will be in vertices whose associated interval contains one of the two points (and not the other). Since these two points are consecutive, this situation can only occur at vertices $q_{i+1-3n}$ and $q_{i+1}$ (recall that, for simplicity, indices are taken modulo $6n$). Since the predecessor and successor of these vertices are equal in both curves, the difference between both curves will be two quadrilaterals (and two more in the second half of the curve).

We claim that the interior of any such quadrilateral can never contain the origin. Barring symmetries, there are two possible ways in which the quadrilateral is formed, depending on the color of $p_i$ and $p_{i+1}$ (see  Figure \ref{crossing1}). Regardless of the case, the interior of any such quadrilateral cannot contain any lattice point, and in particular cannot contain the origin.

\item\label{nonzero} Path $\phi(\sigma)$ has a vertex $q_i$ such that $q_i=(0,0)$, or the curve $\phi(\sigma)$ has nonzero {\em winding} with respect to the origin. Intuitively speaking, the winding number of a closed curve $\mathcal{C}$ with respect to a point measures the net number of clockwise revolutions that a point traveling on $\mathcal{C}$ makes around the given point (see a formal definition in~\cite{winding}). By Properties~\ref{noedge} and~\ref{nointerior}, the only way in which $\phi(\sigma)$ passes through the origin is through a vertex $q_i$. If this does not happen, then no point of the curve passes through the origin. Recall that $\phi(\sigma)$  is a closed continuous curve that contains the origin and is centrally symmetric around that point. In topological terms, this is called an {\em odd} function. It is well known that these functions have odd winding (see for example~\cite{topology}, Lemma 25). In particular, we conclude that the winding of $\phi(\sigma)$ cannot be zero.
\end{enumerate}

\begin{figure}
\begin{center}
  \includegraphics{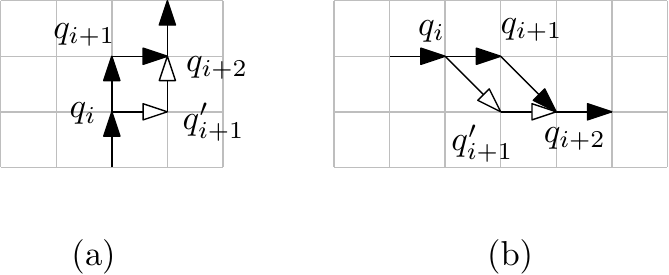}\\
  \caption{The change of $\phi(\sigma)$ when $p_i$ and $p_{i+1}$ are swapped.
(a) $p_i$ is blue and $p_{i+1}$ is green. (b) $p_i$ is red and $p_{i+1}$ is green.} \label{crossing1}
\end{center}
\end{figure}

Thus, imagine a moving point $p\in \ell$ from $y=+\infty$ to $y=-\infty$: when $p$ is located sufficiently low along $\ell$, the $y$ coordinates of the points of $S$ can be ignored, and the resulting order will give first the points to the left of $\ell$ (sorted in decreasing value of the $x$ coordinates) and then the points to the right of $\ell$ (also in decreasing value of the $x$ coordinates). Similarly, the order we obtain when $p$ is sufficiently high will be the exact reverse (see Figure~\ref{fig_extremes}).

\begin{figure}[tb]
\begin{center}
  \includegraphics{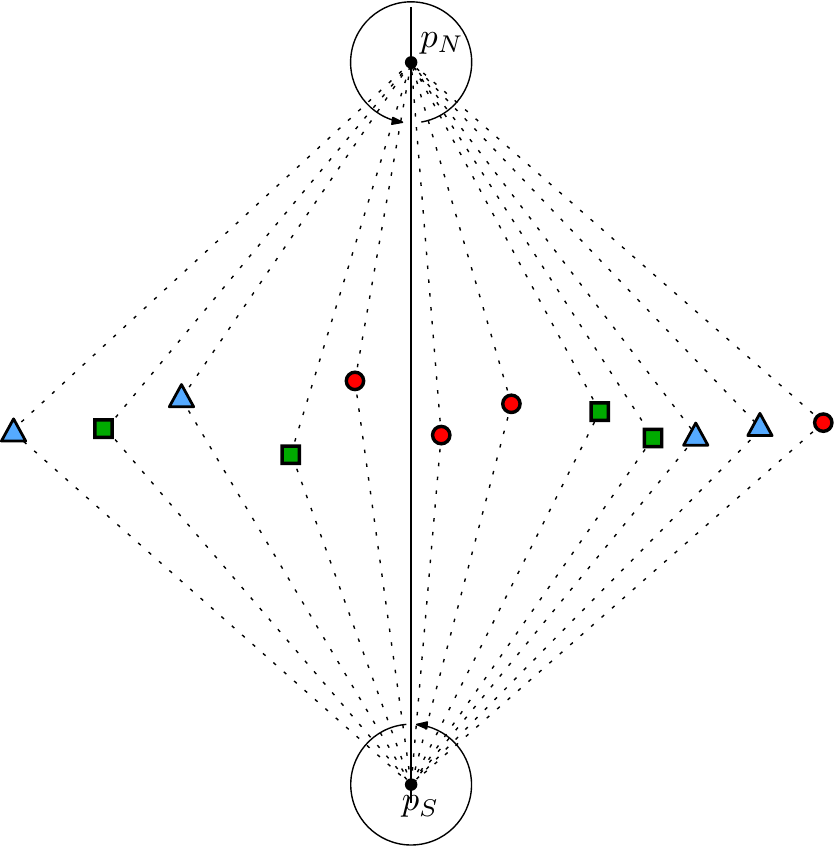}\\
  \caption{When $p$ is placed sufficiently low (or high), the sorting $p_\sigma$ gives the points ordered according to their $x$ coordinate. The orderings corresponding to points $p_N$ and $p_S$ are depicted with an arrow. In particular, notice that the two orderings are reversed.} \label{fig_extremes}
\end{center}
\end{figure}

Now consider the translation of $p$ from both extremes and the changes that may happen to the ordering along the translation. By Property~\ref{noedge}, the curve $\phi(p)$ will change when we cross a line of $\mathcal{A}$, but the differences between two consecutive curves will be very small. In particular, the space between the two curves cannot contain the origin. Consider now the instants of time in which point $p$ is at $y=+\infty$ and $y=-\infty$: if either curve contains the origin as a vertex, we are done (since such vertex is associated to a balanced double wedge). Otherwise, we observe that the orderings must be the reverse of each other, which in particular implies that the associated curves describe exactly the same path, but in reverse direction. By Property~\ref{nonzero} both curves have nonzero winding, which in particular it implies that they have opposite winding numbers (i.e. their winding number gets multiplied by -1). Since the winding must change sign, we conclude that at some point in the translation the curve $\phi(p)$ passed through the origin (Otherwise, by Hopf's degree Theorem~\cite{needham} they would have the same winding). By Properties~\ref{noedge} and~\ref{nointerior} this can only happen at a vertex of $\phi(p)$, implying the existence of a balanced double wedge.
\end{proof}

Using the point-line duality again, we obtain the equivalent result for balanced sets of lines.

\begin{corollary}
\label{cor:halvingSegment}
Let $\setL$ be a $3$-colored balanced  set of $6n$ lines in $\Reals^2$ inducing a simple arrangement. Then, there always exists a segment intersecting exactly $n$ lines of each color.
\end{corollary}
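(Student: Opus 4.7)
The plan is to obtain this statement as the point-line dual of Theorem~\ref{th:bisectingdoubleWedge}, exactly as Theorem~\ref{thm:doubleWedge111} was obtained as the dual of Corollary~\ref{cor:segment111}. Concretely, I would use the standard duality $p=(a,b)\mapsto p^*:y=ax-b$ (and vice versa for a non-vertical line), under which each line $\ell\in\setL$ maps to a point $\ell^*$ and each point to a non-vertical line. This duality is an involution and preserves the vertical above/below relation between points and lines.

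First, I would dualize $\setL$ to obtain a colored point set $S=\{\ell^*:\ell\in\setL\}$ of $6n$ points with $2n$ of each color (colors are inherited from the primal). Simplicity of $\mathcal{A}(\setL)$---no two parallel and no three concurrent lines---translates into $S$ having pairwise distinct $x$-coordinates and no three collinear points, which is the general position hypothesis of Theorem~\ref{th:bisectingdoubleWedge}. Should $\setL$ contain a vertical line, a generic rotation of the coordinate system removes the issue without altering the combinatorial structure.

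Next, I would apply Theorem~\ref{th:bisectingdoubleWedge} to $S$ to obtain a double wedge $W$ containing exactly $n$ points of each color. Its bounding lines meet at an apex; writing these bounding lines as $p^*$ and $q^*$ for two points $p,q\in\Reals^2$, the primal of $W$ is the open segment $s=\overline{pq}$. By the preservation of the above/below relation under the duality, $\ell^*\in W$ if and only if $p$ and $q$ lie on opposite sides of $\ell$, equivalently, if and only if $s$ crosses $\ell$. Hence $s$ is a segment crossing exactly $n$ lines of each color, as required.

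The only subtleties are bookkeeping: checking that ``simple arrangement'' dualizes to ``general position'' and that the endpoints of $s$ can be assumed to miss every line of $\setL$ (the latter is free, since $W$ is open and can be taken to avoid the finitely many points $\ell^*$ not included in $W$). Both checks are routine consequences of well-known properties of the point-line duality, so no substantial obstacle is expected.
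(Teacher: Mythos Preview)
Your proposal is correct and follows exactly the approach the paper uses: the paper simply states that the corollary follows from Theorem~\ref{th:bisectingdoubleWedge} ``using the point-line duality again,'' and you have spelled out the standard details of that dualization (including the general-position check and the double wedge/segment correspondence). Nothing further is needed.
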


\section{Balanced partitions on closed Jordan curves} \label{jordan}
In this section we consider balanced $3$-colored point sets on closed Jordan curves. Our aim is to find a bipartition of the set that is balanced and that can be realized by at most two disjoint intervals of the curve. To prove the claim we use the following arithmetic lemma:

\begin{lemma}
\label{lem:ArithmeticLemma}
For a fixed integer $n \geq 2$, any integer $k \in \{1, 2, \dots , n \}$ can be obtained from $n$ by applying functions $f(x) = \lfloor x/2 \rfloor$ and $g(x) = n-x$ at most $2 \log n + O(1)$ times.
\end{lemma}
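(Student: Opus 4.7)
I plan to work in the reverse direction: starting from $k$, apply the inverse operations $y\mapsto 2y$, $y\mapsto 2y+1$ (whenever the result is at most $n$), and $y\mapsto n-y$, aiming to reach $n$ in at most $2\log n+O(1)$ steps. Each such reverse sequence corresponds, read backward, to a forward sequence of $f$'s and $g$'s of the same length taking $n$ to $k$. If $k>n/2$, I begin with a single application of $y\mapsto n-y$ to reduce to $k':=n-k\le n/2$, so I may assume $k\le n/2$ throughout.

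The core of the construction is an algorithm in phases. In phase $i$, given $y_i\le n/2$, set $j_i:=\lfloor\log_2(n/y_i)\rfloor\ge 1$, so that $2^{j_i}y_i\le n<2^{j_i+1}y_i$. Perform $j_i$ doublings $y\mapsto 2y+b$, $b\in\{0,1\}$, keeping the running value at most $n$; this produces $y_i'=2^{j_i}y_i+c_i$ for some $c_i\in[0,\min(2^{j_i}-1,\,n-2^{j_i}y_i)]$. If $y_i'=n$ the algorithm terminates; otherwise apply $y\mapsto n-y$ to obtain $y_{i+1}=n-y_i'\in[0,n/2)$, and pass to phase $i+1$. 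A non-terminal phase thus costs $j_i+1$ operations.

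The bits $c_i$ are chosen so that $y_{i+1}$ becomes a \emph{binary prefix of $n$}, namely $\lfloor n/2^{\ell}\rfloor$ for some $\ell$, as soon as possible: once $y_i=\lfloor n/2^{j_i}\rfloor$ is a prefix, the $j_i$ doublings with bits matching the low $j_i$ bits of $n$ give $y_i'=n$ directly. The total operation count is $\sum_i j_i+(P-1)$, where $P$ is the number of phases; the bound $2\log n+O(1)$ will follow from (a) $P=O(\log n)$, obtained from the fact that with the right bit choices $y_{i+1}$ is divisible by a larger power of $2$ than $y_i$ and hence reaches a prefix within $O(\log n)$ phases, and (b) $\sum_i j_i=O(\log n)$, obtained from the inequality $y_{i+1}\ge 2^{j_i}$, which together with $y_{i+1}\le n/2$ yields $j_{i+1}+j_i\le\lfloor\log_2 n\rfloor$ and a telescoping argument on the bits of $n$ ``consumed'' in each phase.

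The main obstacle is to guarantee that the ``prefix-producing'' bit choice is always available within $O(\log n)$ phases. A naive rule such as ``always take $c_i$ equal to the low bits of $n$'' can cycle: for $n=12$, $k=5$ it yields the loop $5\mapsto 2\mapsto 4\mapsto 2\mapsto\cdots$, whereas the alternative $c_0=1$ reaches $y_1=1=\lfloor 12/8\rfloor$ and finishes one phase later. The delicate combinatorial step is therefore to show that the feasible interval $[\,n-2^{j_i}y_i-(2^{j_i}-1),\;n-2^{j_i}y_i\,]$ for $y_{i+1}$ meets the set of prefixes of $n$ often enough, so that a termination-producing choice exists within a bounded number of phases; I expect this is where the $O(1)$ additive slack in the lemma's bound is used.
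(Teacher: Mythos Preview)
Your reverse-direction instinct is correct and is exactly how the paper sets things up, but committing to a \emph{single} backward trajectory and then steering it with bit choices is where the argument stalls. The inequality $y_{i+1}\ge 2^{j_i}$ you invoke in (b) is not forced: if $2^{j_i}y_i$ is already within $2^{j_i}$ of $n$ (e.g.\ $n=13$, $y_i=3$, $j_i=2$, so $2^{j_i}y_i=12$), then every admissible $c_i$ gives $y_{i+1}<2^{j_i}$. Even granting that inequality, the bound $j_i+j_{i+1}\le\lfloor\log_2 n\rfloor$ only controls consecutive \emph{pairs}, which yields $\sum_i j_i=O(P\log n)$ rather than $O(\log n)$; the ``telescoping on bits of $n$ consumed'' is asserted, not carried out. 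And for (a) you correctly observe that naive bit rules can cycle but supply no rule that provably does not. So the proposal has a genuine gap precisely at the step you yourself flag as delicate.

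The paper sidesteps all of this by tracking not a single value but the whole \emph{interval} $\mathcal V_i\subseteq\{1,\dots,n\}$ of integers from which $k$ is reachable in at most $i$ steps, starting from $\mathcal V_0=\{k\}$. One preimage step under $f$ sends $[\ell,r]$ to $[2\ell,\,2r+1]$, doubling its length; one step of $g$ reflects the interval and preserves the length. Choosing $f^{-1}$ whenever the interval lies below $\lfloor n/2\rfloor$ and $g$ whenever it lies above, one never needs two consecutive $g$'s, so the length at least doubles every two steps; within $2\log n+O(1)$ steps the interval contains $\lfloor n/2\rfloor=f(n)$, and one further step captures $n$. Replacing your single trajectory by this growing interval is the missing idea: it swallows all of the bit-choice freedom into a set and turns the termination argument into a one-line size count.
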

\begin{proof}
Consider a fixed value $k\leq n$. In the following we show that $k=h_t(h_{t-1}(\cdots h_1(n)\cdots ))$ for some $t\le 2\log n+O(1)$ where each $h_i$ is either $f$ or $g$.

For the purpose, we use the concept of {\em starting points}: we say that an integer $m$ is an $i$-starting point (with respect to $k$) if number $k$ can be obtained from $m$ by applying functions from $\{f,g\}$ at most $i$ times. Note that any number is always a $0$-starting point with respect to itself, and our claim essentially says that $n$ is a $(2 \log n + O(1))$-starting point with respect to any $k\leq n$.

Instead of explicitly computing all $i$-starting points, we compute a consecutive interval $\mathcal{V}_i\subseteq \{1,\ldots, n\}$ of starting points. For any $i\geq 0$, let $\ell_i$ and $r_i$ be the left and right endpoints of the interval $\mathcal{V}_i$, respectively. This interval is defined as follows: initially, we set $\mathcal{V}_0=\{k\}$. For larger values of $i$, we use an inductive definition: If $r_i<\lfloor n/2\rfloor$ we apply $f$ as the first operation. That is, any number that, after we apply $f$, falls within $\mathcal{V}_i$ should be in $\mathcal{V}_{i+1}$. Observe that this implies $\ell_{i+1}=2\ell_i$ and $r_{i+1}=2r_i+1$. If $\ell_i>\lfloor n/2\rfloor$ we apply $g$ as the first operation. In this case, we have $\ell_{i+1}=n-r_i$ and $r_{i+1}=n-\ell_i$. By construction, the fact that all elements of $\mathcal{V}_i$ are $i$-starting points implies that elements of $\mathcal{V}_{i+1}$ are $(i+1)$-starting points.

This sequence will finish at some index $j$ such that $\ell_j<\lfloor n/2\rfloor<r_j$ (that is, $\lfloor n/2\rfloor$ is a $j$-starting point). In particular, we have that $n$ is a $(j+1)$-starting point since $f(n)=\lfloor n/2\rfloor$. Thus, to conclude the proof it remains to show that $j\leq 2 \log n + O(1)$. Each time we use operator $f$ as the first operation, the length of the interval is doubled.
On the other hand, each time we use operator $g$, the length of the interval does not change.
Moreover, function $g$ is never applied twice in a row. Thus, after at most $2(\lceil\log n\rceil-1)$ steps, the size of interval $\mathcal{V}_i$ will be at least $2^{\lceil\log n\rceil-1}\geq \lceil n /2 \rceil$, and therefore must contain $\lfloor n/2\rfloor$.
\end{proof}

Now we can prove the main result of this section.
As we explain below, it is enough to prove the result for the case in which the Jordan curve $\gamma$ is the unit circle.
Let $S^1$ be the unit circle in $\mathbb{R}^2$.
Let $P$ be a $3$-colored balanced set of $3n$ points on $S^1$, and let $R$, $G$, and $B$ be the partition of $P$ into the three color classes.

Given a closed curve $\gamma$ with an injective continuous map $f:S^1\to\gamma$ and an integer $c>0$, we say that a set $Q\subseteq \gamma$ is a {\em c-arc set} if $Q=f(Q_S)$ where $Q_S$ is the union of at most $c$ closed arcs of $S^1$. Intuitively speaking, if $\gamma$ has no crossings, $c$ denotes the number of components of $Q$. However, $c$ can be larger than the number of components if $\gamma$ has one or more crossings.

\begin{theorem}\label{thm:curveIntervals}
Let $\gamma$ be a closed Jordan curve in the plane, and let $P$ be a 3-colored balanced set of $3n$  points on $\gamma$.
Then for every positive integer $k\leq n$ there exists a $2$-arc set $P_k \subseteq \gamma$ containing exactly $k$ points of each color.
\end{theorem}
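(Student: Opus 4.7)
Since the remark just before the statement reduces the problem to the case $\gamma = S^1$, I work on the unit circle with the balanced $3$-colored set $P$. The plan is to use Lemma~\ref{lem:ArithmeticLemma}: given the target $k \in \{1, \ldots, n\}$, write $k = h_t(h_{t-1}(\cdots h_1(n) \cdots))$ where each $h_i$ is either $f(x) = \lfloor x/2 \rfloor$ or $g(x) = n - x$. I then build inductively a chain of $2$-arc sets $Q_0, Q_1, \ldots, Q_t \subseteq S^1$ such that $Q_i$ contains exactly $h_i(\cdots h_1(n))$ points of each color, starting from $Q_0 := S^1$ (a $1$-arc, hence $2$-arc, set with $n$ points of each color) and ending with $Q_t =: P_k$.

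Each arithmetic operation is realized geometrically. The $g$-step is immediate: set $Q_{i+1} := S^1 \setminus Q_i$, which is again a union of at most two arcs on $S^1$ and contains $n-x$ points of each color whenever $Q_i$ contains $x$. The $f$-step, which is the main technical step, rests on the following halving lemma: \emph{if $Q \subseteq S^1$ is a $2$-arc set containing exactly $x \geq 1$ points of each color, then there exists a $2$-arc set $Q' \subseteq Q$ containing exactly $\lfloor x/2 \rfloor$ points of each color.}

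To prove the halving lemma, I would view the $3x$ points of $Q$, in their induced cyclic order on $S^1$, as a $3$-colored necklace with up to two natural ``gaps'' separating the two arcs of $Q$. A $2$-arc subset of $Q$ then corresponds to a union of at most two cyclic intervals of this necklace, each contained entirely in one of the two arcs of $Q$. For even $x$, I would invoke Alon's necklace-splitting theorem ($3$ cuts suffice to split a $3$-colored cyclic sequence into two equal parts) and place one cut at a natural gap of $Q$, so that one side of the resulting bipartition is a $2$-arc subset of $Q$ with exactly $x/2$ points of each color. For odd $x$, where the target $\lfloor x/2 \rfloor = (x-1)/2$ breaks the equal-split symmetry, I would first peel off a $2$-arc piece of $Q$ containing exactly one point of each color---whose existence follows from Theorem~\ref{thm:doubleWedge111} applied to the $3x$ points of $Q$ and shrinking the corresponding double wedge to lie inside $Q$---and then apply the even-$x$ halving to the remaining balanced set of size $3(x-1)$, combining the arcs coming from the peel and the halving so that the union has at most two connected components on $S^1$.

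The principal obstacle is this halving lemma, specifically (i)~verifying that Alon's cuts can be placed to respect the two-arc structure of $Q$ so that the output has at most two arcs on $S^1$ rather than three or four, and (ii)~carrying out the odd-$x$ recombination without increasing the arc count. Once the halving lemma is established, iterating it together with complementation along the sequence supplied by Lemma~\ref{lem:ArithmeticLemma} produces the chain $Q_0, \ldots, Q_t$ and gives the desired $2$-arc set $P_k = Q_t$.
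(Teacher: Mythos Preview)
Your overall plan---reduce to $S^1$, invoke Lemma~\ref{lem:ArithmeticLemma}, realize $g$ by complementation and $f$ by a halving lemma---is exactly the paper's structure. The difference is entirely in the halving step. The paper does \emph{not} use the necklace theorem: it lifts the points of the current $2$-arc set $\mathcal{A}_k$ to the moment curve in $\mathbb{R}^3$ and applies the ham-sandwich theorem there. Since any plane meets the moment curve in at most three points and $\mathcal{A}_k$ contributes two arcs, the two open halfspaces cut $\gamma(\mathcal{A}_k)$ into at most five pieces total, so one side has at most two. The odd case is absorbed automatically: the ham-sandwich plane can be taken through one point of each color, leaving exactly $\lfloor k/2\rfloor$ of each color on each open side.

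Your necklace route actually works cleanly for \emph{even} $x$: open the cyclic sequence at one gap of $Q$, apply the open-necklace theorem to get three cuts and four intervals alternating between the two thieves; the one remaining gap lies in a single interval, so the thief \emph{not} owning that interval receives two intervals that are genuine arcs of $S^1$ inside $Q$. This is a legitimate alternative to the moment-curve argument in the even case.

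The genuine gap is the odd case. Your peel-then-halve plan breaks in two places. First, Theorem~\ref{thm:doubleWedge111} gives a double wedge containing one point of each color \emph{among the $3x$ points of $Q$}, but its intersection with $S^1$ need not lie inside $Q$; ``shrinking'' it to fit inside $Q$ can lose those points, so the invocation does not yield a $2$-arc subset of $Q$ with one of each color. Second, even granting such a peel $T\subseteq Q$, the even-halving applied to the remaining $3(x-1)$ points produces intervals that are consecutive \emph{among those points}; realized as arcs of $S^1$, such an interval may contain a peeled point sitting between two retained neighbors, so the resulting $2$-arc set can carry more than $(x-1)/2$ of some color. Your ``recombination'' sentence does not address either issue. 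The paper's moment-curve/ham-sandwich argument sidesteps all of this by handling parity uniformly; if you want to stay combinatorial, you would need a version of the necklace split that, for odd $x$, places the three cuts \emph{through} one bead of each color---which is essentially what the ham-sandwich plane does in the lifted picture.
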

\begin{proof}
Using $f^{-1}$ we can map the points on $\gamma$ to the unit circle, thus a solution on $S^1$ directly maps to a solution on $\gamma$. Hence, it suffices to prove the statement for the case in which $\gamma=S^1$.

Let $I$ be the set of numbers $k$ such that a subset $P_k$ as in the theorem exists.
We prove that $I = \{1, \ldots, n\}$ using Lemma~\ref{lem:ArithmeticLemma}.
To apply the lemma it suffices to show that $I$ fulfills the following properties:
 (i) $n \in I$,
(ii) If $k \in I$ then $n-k \in I$, and
(iii) If $k \in I$ then $k/2 \in I$.
Once we show that these properties hold, Lemma~\ref{lem:ArithmeticLemma} implies that any integer $k$ between 1 and $2n$ must be in $I$, hence $I = \{1, \ldots, n\}$.

Property (i) holds because the whole $S^1$ can be taken as a 2-arc set, containing $n$ points of each color, thus $n \in I$.

Property (ii) follows from the fact that the complement of any 2-arc set containing exactly $k$ points of each color is a 2-arc set containing exactly $n-k$ points of each color.
Thus if there is a 2-arc set guaranteeing that $k \in I$, its complement guarantees that $n-k \in I$.

Proving Property (iii) requires a more elaborate argument.
Let $\mathcal{A}_k$ be a $2$-arc set containing exactly $k$ points of each color (such a set must exist by hypothesis, since $k \in I$).
We assume $S^1$ is parameterized as $(\cos(t), \sin(t))$, for $t \in [0, 2\pi)$.
 Without loss of generality, we assume that $f(0)\not\in \mathcal{A}_k$ (if necessary we can change the parametrization of $S^1$ by moving the location of the point corresponding to $t=0$ to ensure this).

We lift all points of $P$ to $\mathbb{R}^3$ using the \emph{moment curve}, as explained next. Abusing slightly the notation, we identify each point $(\cos(t), \sin(t))$, $t \in [0, 2\pi)$ on $S^1$ with its corresponding parameter $t$. Then, for $t \in S^1$ we define $\gamma(t) = \{t, t^2, t^3\}$. Also, for any subset $\mathcal{C}$ of $S^1$, we define $\gamma(\mathcal{C}) = \{\gamma(p) | p \in \mathcal{C}\}$. Recall that we assumed that $f(0) \notin \mathcal{A}_k$, thus, $\gamma(\mathcal{A}_k)$ forms two disjoint arc-connected intervals in $\gamma(S^1)$.

Next we apply the ham-sandwich theorem to the points in $\gamma(\mathcal{A}_k)$ (disregarding other lifted points of $P$): we obtain a plane $H$ that cuts the three colored classes in $\gamma(\mathcal{A}_k)$ in half. That is, if $k$ is even, each one of the open halfspaces defined by $H$ contains exactly $k/2$ points of each color. If $k$ is odd, then we can force $H$ to pass trough one point of each color and leave $(k-1)/2$ points in each open halfspace (\cite{Matousek}, Cor. 3.1.3). We denote by $H^+$ and $H^-$ the open halfspaces above and below $H$, respectively. Note that each half space contains exactly $\lfloor k/2 \rfloor$ points of each color class, as desired.

Let $M_1 = H^+ \cap \gamma(\mathcal{A}_k)$ and $M_2 = H^- \cap \gamma(\mathcal{A}_k)$. Note that both $M_1$ and $M_2$ contain exactly $\lfloor k/2 \rfloor$ points of each color class, as desired. To finish the proof it is enough to show that either $\gamma^{-1}(M_1)$ or $\gamma^{-1}(M_2)$ is a $2$-arc set. Since $\gamma(\mathcal{A}_k)$ has two connected components (and is lifted to the moment curve), we conclude that any hyperplane (in particular $H$) can intersect $\gamma(\mathcal{A}_k)$ in at most $3$ points. Thus, the total number of components of $M_1 \cup M_2$ is at most 5. This is also true for the preimages $\gamma^{-1}(M_1) \cup \gamma^{-1}(M_2)$. Then, either  $\gamma^{-1}(M_1)$ or $\gamma^{-1}(M_2)$ must form a $2$-arc set containing exactly $\lfloor k/2 \rfloor$ points of each color.
Thus,  $\lfloor k/2 \rfloor\in I$ as desired.
\end{proof}

Our approach generalizes to $c$ colors: if $P$ contains $n$ points of each color on $S^1$, then for each $k\in \{1, \ldots, n\}$ there exists a $(c-1)$-arc set $P_k \subseteq S^1$ such that $P_k$ contains exactly $k$ points of each color. In our approach we lift the points to $\mathbb{R}^3$ because we have three colors, but in the general case we would lift to $\mathbb{R}^c$. We also note that the bound on the number of intervals is tight. Consider a set of points in $S^1$ in which the points of the first $c-1$ colors are contained in $c-1$ disjoint arcs (one for each color), and each two neighboring disjoint arcs are separated by $n/(c-1)$ points of the $c^{th}$ color. Then, if $k < n/(c-1)$, it is easy to see that we need at least $c-1$ arcs to get exactly $k$ points of each color.

We note that there exist several results in the literature that are similar to Theorem~\ref{thm:curveIntervals}. For example, in~\cite{Stromquist1985} they show that given $k$ probability measures on $S^1$, we can find a $c$-arc set whose measure is exactly $1/2$ in the $c$ measures. The methods used to prove their result are topological, while our approach is combinatorial. Our result can also be seen as a generalization of the well-known \emph{necklace theorem} for closed curves~\cite{Matousek}.

\section{$L$-lines in the plane lattice}\label{section:lattice}

We now consider a balanced partition problem for $3$-colored point sets in the integer plane lattice  $\mathbb{Z}^2$.
For simplicity, we will refer to $\mathbb{R}^2$ and $\mathbb{Z}^2$ as the plane and the lattice, respectively.
Recall that a set of points in the plane is said to be in \emph{general position} if no three of them are collinear.
When the points lie in the lattice, the expression is used differently: we say instead that a set of points $S$
in the lattice is in \emph{general position} when every vertical line and horizontal line contains at most one point from $S$.

An \emph{$L$-line with corner} $q\in\mathbb{R}^2$ is the union of two different rays with common apex $q$,
each of them being either vertical or horizontal.
An $L$-line partitions the plane into two regions (Figure~\ref{fig:L-line} shows a balanced $L$-line with apex $q$).
Since we look for balanced $L$-lines, we will only consider $L$-lines that do not contain any point of $S$.
Note that an $L$-line can always be slightly translated so that its apex is not in the lattice, thus its rays do not go through any lattice point.

\begin{figure}
 \begin{center}
\includegraphics*{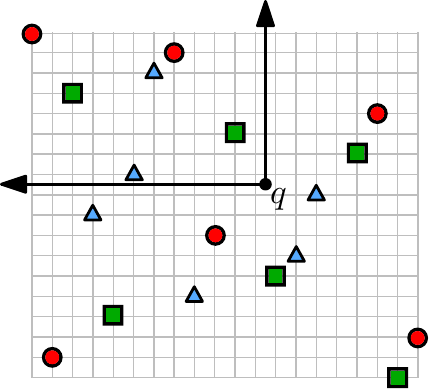}
 \end{center}
 \caption{A balanced set of $18$ points in the integer lattice with a nontrivial balanced $L$-line.}
\label{fig:L-line}
\end{figure}

$L$-lines in the lattice often play the role of regular lines in the Euclidean plane. For example, the classic ham-sandwich theorem (in its discrete version) bisects a 2-colored finite point set by a line in the plane.
Uno et al.~\cite{Kano2009} proved that, when points are located in the integer lattice, there exists a bisecting $L$-line as well.

Recently, the following result has been proved for bisecting lines in the plane.

\begin{theorem}[Bereg and Kano, \cite{Bereg2012}]\label{thm:balancedLine}
Let $S$ be a 3-colored balanced set of $3n$ points in general position in the plane.
If the convex hull of $S$ is monochromatic, then there exists a nontrivial balanced line.
\end{theorem}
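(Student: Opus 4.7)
The plan is to fix a vertex of the monochromatic hull and rotate a line through it, combining the hull hypothesis (which gives clean boundary counts) with a double discrete intermediate value argument in the $(r,b,g)$-space. Without loss of generality assume the hull of $S$ is red, and let $p$ be a hull vertex. For each direction $\theta\in[0,2\pi)$ let $\ell_\theta$ be the line through $p$ at angle $\theta$, and write $(r(\theta),b(\theta),g(\theta))$ for the number of red points of $S\setminus\{p\}$, blue points, and green points in the open half-plane to the left of $\ell_\theta$. Because $p$ is a red hull vertex, there is a range of tangent directions where all of $S\setminus\{p\}$ lies on one side, giving $(r,b,g)=(0,0,0)$ on that side; rotating by $\pi$ swaps sides and gives the antipodal state $(n-1,n,n)$. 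In a full half-rotation the triple evolves from one extreme to the other by unit increments of the form $(1,0,0)$, $(0,1,0)$, or $(0,0,1)$ as each non-$p$ point crosses $\ell_\theta$ exactly once.

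The core step is to locate some $\theta_0$ with $r(\theta_0)=b(\theta_0)=g(\theta_0)=k$ and $0<k<n$. I would first equalize blue and green by applying a discrete intermediate value argument to $h(\theta)=b(\theta)-g(\theta)$: this step function changes by $\pm 1$ at each blue/green event and satisfies $h(\theta+\pi)=-h(\theta)$ by antipodality, so it vanishes on a nonempty set $\Theta$ of directions. Restricted to $\Theta$, I would then track the discrepancy $d(\theta)=r(\theta)-b(\theta)$ and use the antipodal relation $d(\theta+\pi)=-d(\theta)-1$ (which follows from $r(\theta+\pi)=n-1-r(\theta)$ and $b(\theta+\pi)=n-b(\theta)$) to conclude that $d$ attains both nonnegative and negative values along the rotation, and therefore must vanish at some $\theta_0\in\Theta$; that $\theta_0$ yields the desired balanced line once nontriviality is established.

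The main obstacle is precisely that last point: ensuring the common value $k=b(\theta_0)=g(\theta_0)=r(\theta_0)$ lies strictly between $0$ and $n$. The monochromatic hull hypothesis is essential here, since without it the trajectory through the $(r,b,g)$-lattice could hit the diagonal only at $(0,0,0)$ and $(n-1,n-1,n-1)$-type trivial points, exactly as in the equilateral-disks construction of the introduction. The red-vertex hypothesis anchors the two extremes of the rotation at genuinely empty and full half-planes, so that a diagonal crossing produced by the second intermediate value argument is forced to be interior to the point set. Making this interior-crossing guarantee precise is the technically delicate part; I expect it to require a careful accounting of which color first enters the half-plane as $\theta$ leaves the tangent range, together with handling coincident events where several points cross $\ell_\theta$ at the same angle, and possibly a fallback argument that repeats the construction at a second red hull vertex when pinning the line at $p$ turns out to be too restrictive.
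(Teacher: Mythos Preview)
The paper does not actually prove this theorem; it is quoted from Bereg and Kano~\cite{Bereg2012} as background for the lattice analogue (Theorem~\ref{th:2lattice}), so there is no in-paper proof to compare against. That said, your sketch has a genuine gap worth naming.

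Your two-stage intermediate value argument does not compose. You first locate the set $\Theta$ of directions with $b(\theta)=g(\theta)$, and then you want a second discrete IVT for $d(\theta)=r(\theta)-b(\theta)$ restricted to $\Theta$. But $\Theta$ is a finite disconnected union of arcs, and $d|_\Theta$ does not change by unit steps: between consecutive components of $\Theta$ the value of $d$ can jump arbitrarily, so the antipodal relation $d(\theta+\pi)=-d(\theta)-1$ only tells you $d$ takes both signs on $\Theta$, not that it hits $0$ there. Finding $\theta$ with $b=g$ and finding $\theta$ with $r=b$ are each one-dimensional IVT problems; forcing them to happen simultaneously is a genuinely two-dimensional obstruction that two scalar IVTs will not resolve. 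You acknowledge difficulties near the end, but you locate them in the nontriviality of $k$, whereas the real failure is earlier, at the existence of \emph{any} $\theta_0$ with $r=b=g$.

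This is precisely the obstacle the paper confronts in proving the lattice analogue: instead of two scalar discrepancies it encodes the state as a closed centrally symmetric polygonal curve in $\mathbb{Z}^2$ (vertices $q_k=(3b_k-k,\,3g_k-k)$) and uses a winding-number argument; the monochromatic-hull hypothesis pins the curve's endpoints at $(\pm 1,\pm 1)$ and forces nonzero winding. A continuous family of such curves is then shown to sweep the origin only through a vertex. If you want to salvage your approach, the natural repair is to replace $h$ and $d$ by a single planar curve and argue topologically, and you should expect to vary the pivot rather than fix it at a single vertex $p$.
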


As a means to further show the relationship between lines in the plane, and $L$-lines in the lattice, the objective of this section is to extend Theorem~\ref{thm:balancedLine} to the lattice. Replacing the term {\em line} for {\em $L$-line} in the above result does not suffice (see a counterexample in Figure~\ref{fig:diagonal}). In addition we must also use the orthogonal convex hull.

\begin{figure}
 \begin{center}
\includegraphics*{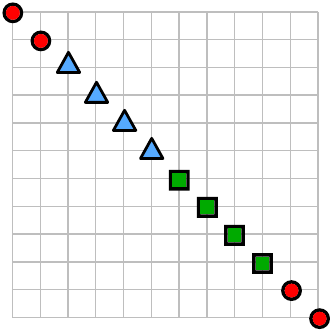}
 \end{center}
 \caption{A balanced set of 3-colored points in the plane
lattice. Any $L$-line containing points of all three colors will fully contain a color class, hence this problem instance does not admit a nontrivial balanced $L$-line.}
\label{fig:diagonal}
\end{figure}

\begin{theorem}\label{th:2lattice}
Let $S$ be a 3-colored balanced set of $3n$ points in general position in the integer lattice. If the orthogonal convex hull of $S$ is monochromatic, then there exists a nontrivial balanced $L$-line. 
\end{theorem}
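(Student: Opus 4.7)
Without loss of generality, assume that the monochromatic color of the orthogonal convex hull of $S$ is red; in particular, the topmost, bottommost, leftmost, and rightmost points of $S$ (denoted $p_N$, $p_S$, $p_W$, $p_E$) are red.

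My plan is to mimic, in the discrete setting of L-lines, the winding-number argument used in the proof of Theorem~\ref{th:bisectingdoubleWedge}. I focus on L-lines of NE type, whose inside is the NE quadrant $Q(q)=\{(x',y'):x'>x,\ y'>y\}$ with apex $q=(x,y)$. Let $r(q),g(q),b(q)$ denote the red, green, blue counts in $Q(q)$, and define the imbalance vector
\[
\phi(q) \;=\; \bigl(g(q)-r(q),\,b(q)-r(q)\bigr)\in\mathbb{Z}^2.
\]
A non-trivial balanced NE L-line corresponds to a corner with $\phi(q)=(0,0)$ and $1\le r(q)\le n-1$.

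Sweep $q$ along a monotone staircase from far northeast of $S$ (where $Q(q)=\emptyset$) to far southwest (where $Q(q)=S$). Each time the staircase crosses a row or column through a point of $S$, that point joins $Q(q)$ and $\phi$ takes a unit step: $(-1,-1)$ for a red, $(+1,0)$ for a green, $(0,+1)$ for a blue. The resulting walk in $\mathbb{Z}^2$ begins and ends at $(0,0)$. The monochromatic hypothesis forces the first point to join $Q$ to be a ``Pareto-NE'' point of $S$, which lies on the NE staircase of the orthogonal convex hull and is therefore red; hence the walk's first step is $(-1,-1)$. By the symmetric argument near the end of the sweep, the walk's last step is also $(-1,-1)$. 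Concatenating this walk with the walk of a symmetric SW-type sweep (whose inside region, the SW quadrant from the corner, has counts complementary to those of the NE quadrant up to a reflection through the centroid of the data), one obtains a closed centrally symmetric curve in $\mathbb{Z}^2$. By the odd-winding argument exploited in Property~\ref{nonzero} of the proof of Theorem~\ref{th:bisectingdoubleWedge}, such a centrally symmetric closed curve has odd winding around the origin and must therefore pass through $(0,0)$ at some vertex other than the two trivial endpoints. That vertex corresponds to a corner $q^*$ realizing a non-trivial balanced NE L-line.

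If no NE-type sweep yields such a crossing, I repeat the argument with the orientations NW, SE, SW together with the degenerate horizontal and vertical lines, which jointly exhaust all L-lines. The main obstacle is making the winding step precise: the 2-parameter family of corners is richer than the 1-parameter rotational family of Theorem~\ref{th:bisectingdoubleWedge}, so choosing the right 1-dimensional slice (the sweep path) and the right dual sweep producing a centrally symmetric closed curve is delicate. Furthermore, the ``skewed'' red steps $(-1,-1)$ do not combine as cleanly with the $(+1,0)$ and $(0,+1)$ steps as the $\pm e_i$ steps in the double-wedge proof, so the parity bookkeeping is more subtle; it is precisely the monochromatic boundary constraint (forcing the first and last steps to be red, i.e.\ to point along the skew direction $(-1,-1)$) that prevents the walk from bypassing the origin and thereby secures the odd winding.
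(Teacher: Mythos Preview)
Your proposal shares the paper's core idea---encode the imbalance as a lattice walk, make it centrally symmetric, and invoke a winding argument---but it has a genuine gap at the crucial step.

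You assert that a centrally symmetric closed curve ``has odd winding around the origin and must therefore pass through $(0,0)$ at some vertex other than the two trivial endpoints.'' This is backwards. Odd winding means precisely that the curve \emph{avoids} the origin (winding is only defined for curves not through the origin). The correct dichotomy is: either the curve hits the origin at some vertex, or it has nonzero (odd) winding. In the second case you are stuck: a single curve with nonzero winding tells you nothing about balanced $L$-lines. The paper resolves this by constructing not one curve but a \emph{homotopy} between two such curves---one coming from an ordering $\sigma_y$ and one from its reverse $\sigma_y^{-1}$---whose windings have opposite sign. Since the winding must flip, some intermediate curve must pass through the origin, and the bulk of the paper's work is the delicate verification (via a mod-$3$ argument on the coordinates $q_k=(3b_k-k,\,3g_k-k)$) that the small quadrilaterals swept out between consecutive curves in the homotopy never contain the origin in their interior, so the passage must occur at a vertex.

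Your construction, by contrast, fixes a single monotone staircase, producing a single curve. You have no second curve with opposite winding and no homotopy to interpolate between them; the fallback ``repeat with NW, SE, SW'' does not supply this, since each orientation again yields an isolated curve. You also leave the central-symmetry claim vague: the NE and SW quadrants from the same corner are not complementary (they omit the NW and SE quadrants), so concatenating an NE sweep with a SW sweep does not obviously produce the negative walk. The paper sidesteps this by artificially appending $(-q_1,\ldots,-q_{3n-1})$ to each half-curve, which forces central symmetry by fiat; the content is then entirely in the homotopy step, which you are missing.
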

\begin{proof}
Recall that a set $S\subset \mathbb{R}^2$ is said to be {\em orthoconvex} if the intersection of $S$ with every horizontal or vertical line is connected. The {\em orthogonal convex hull} of a set $S$ is the intersection of all connected orthogonally convex supersets of S.

Without loss of generality, we assume that the points on the orthogonal convex hull are red. We use a technique similar to that described in the proof of Theorem~\ref{th:bisectingdoubleWedge}; that is, we will create a sequence of orderings, and associate a polygonal curve to each such ordering. As in the previous case, we show that a curve can pass through the origin only at a vertex, which will correspond to the desired balanced $L$-line. Finally, we find two orderings that are reversed, which implies that some intermediate curve must pass through the origin. The difficulty in the adaptation of the proof lies in the construction of the orderings. This is, to the best of our knowledge, the first time that such an ordering is created for the lattice.

Given a point $p\in S$ we define the $0$-ordering of $p$ as follows. Consider the points above $p$ (including $p$) and sort them by decreasing $y$-coordinate, i.e., from top to bottom. Let $(p_1,\ldots, p_{j})$ be the  sorting obtained (notice that $p_j=p$). The remaining points (i.e., those strictly below $p$) are sorted by increasing $x$-coordinate, i.e., from left to right. Let $\sigma_{p,0}$ denote the sorting obtained.
 Similarly, we define the $\pi/2$, $\pi$ and $\frac{3\pi}{2}$-sided orderings $\sigma_{p,{\frac{\pi}{2}}}$, $\sigma_{p,\pi}$ and $\sigma_{p,{\frac{3\pi}{2}}}$, respectively. Each ordering can be obtained by computing $\sigma_{p,0}$ after having rotated clockwise the point set $S$ by $\pi/2$, $\pi$ or $3\pi/2$ radians, respectively. As an example, Figure~\ref{projection} shows $\sigma_{p,{\frac{3\pi}{2}}}$. Let $\mathcal{O}=\{\sigma_{p,i}\,|\, p\in S, i\in \{0,\frac{\pi}{2},\pi,\frac{3\pi}{2}\}\}$ be the collection of all such orderings of $S$.

\begin{figure}[h!]
\begin{center}
  \includegraphics{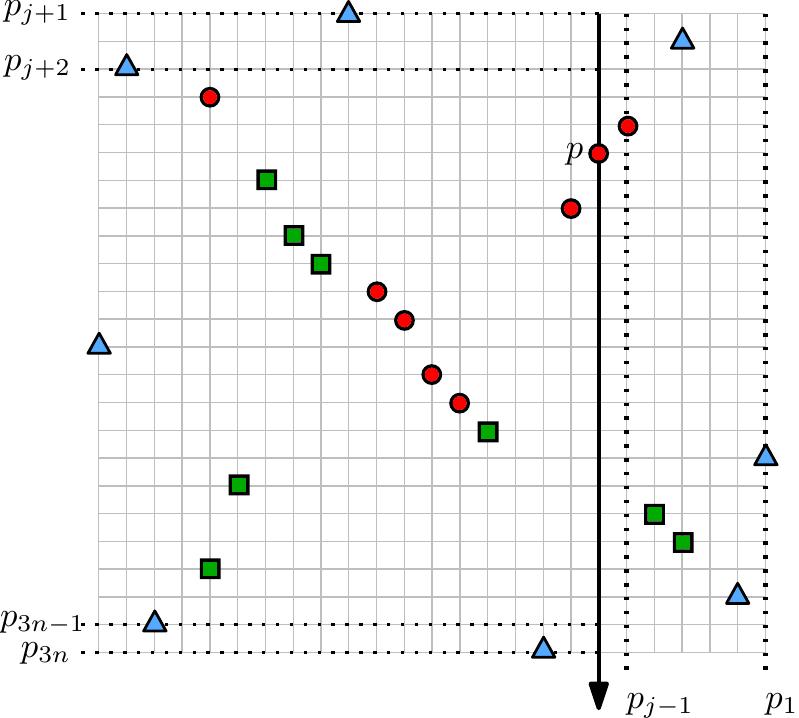}\\
  \caption{The $\frac{3\pi}{2}$-ordering of $S$ with respect to a given point $p$.}\label{projection}
\end{center}
\end{figure}

By construction, any prefix of an ordering of $\mathcal{O}$ corresponds to a set of points that can be separated with an $L$-line. Likewise, any $L$-line will appear as prefix of some sorting $\sigma\in \mathcal{O}$ (for example, one of the sortings associated with the apex of the $L$-line).

Given a sorting $\sigma=(p_1, \ldots, p_{3n})\in\mathcal{O}$, we associate it with a polygonal curve in the lattice as follows: for every $k \in \{1, \ldots, 3n-1\}$ let $b_k$ and $g_k$ be the number of blue and green points in $\{p_1\ldots, p_k\}$, respectively. Further, define the point $q_k := (3b_k -k, 3g_k-k)$.
Based on these points we define a polygonal curve $\phi(\sigma)=(q_1,\ldots, q_{3n-1}, -q_1, \ldots, -q_{3n-1})$. Similarly to the construction of Theorem~\ref{th:bisectingdoubleWedge}, the fact that $q_k=(0,0)$ for some $1 \leq k\leq 3n-1$ is equivalent to the fact that the corresponding $L$-line is balanced.
Therefore the goal of the proof is to show that there is always some ordering in $\mathcal{O}$ for which some $k$ has $q_k=(0,0)$.

We observe several important properties of $\phi(\sigma)$:

\begin{enumerate}
\item $\phi(\sigma)$ is centrally symmetric (w.r.t. the origin). This follows from the definition of $\phi$.

\item
\label{prop:int_segment}
The interior of any segment $q_iq_{i+1}$ of $\phi(\sigma)$ cannot contain the origin. The segment connecting two consecutive vertices in $\phi(\sigma)$ only depends on the color of the added element. Thus, there are only $3$ possible types of segments, see Figure~\ref{segments}. Since these segments do not pass through grid points, the origin cannot appear at the interior of a segment.

\item
\label{prop:winding}
For any $\sigma \in \mathcal{O}$, we have $q_1=(-1,-1)$, and $q_{3n-1}=(1,1)$. Notice that $q_1$ corresponds to an $L$-line having exactly one point on its upper/left side, while $q_{3n-1}$ corresponds to an $L$-line leaving exactly one point on its lower/right side. In particular, these points must belong to the orthogonal convex hull, and thus must be red. That is, $\phi(\sigma)$ is a continuous polygonal curve that starts at $(-1,-1)$, travels to $(1,1)$. The curve is symmetric and returns to $(-1,-1)$. Thus, as in the proof of Theorem~\ref{th:bisectingdoubleWedge} we conclude that either  $\phi(\sigma)$ passes through the origin or it has  nonzero winding.

\end{enumerate}

\begin{figure}[h!]
\begin{center}
  \includegraphics{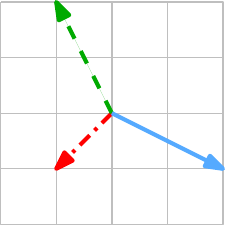}\\
  \caption{The three possible types of segments $q_{k-1}q_{k}$ depending on the color of $p_k$ (solid represents blue, dashed represents green and dotted dashed represents red).} \label{segments}
\end{center}
\end{figure}

Let $\sigma_x =(x_1, x_2, \ldots, x_{3n})$ be the points of $S$ sorted from left to right (analogously, $\sigma_y=(y_1, y_2, \ldots, y_{3n})$ for the points sorted from bottom to top). Observe that  $\sigma_x = \sigma_{x_{3n},\pi/2}$ and $\sigma_y= \sigma_{y_{3n},{\pi}}$ and their reverses are $\sigma^{-1}_x=\sigma_{x_{1}, \frac{3\pi}{2}}$ and $\sigma^{-1}_y=\sigma_{y_1, 0}$, respectively.

Analogous to the proof of Theorem~\ref{th:bisectingdoubleWedge}, our aim is to transform $\phi(\sigma_y)$ to its reverse ordering
through a series of small transformations of polygonal curves such that the winding number between the first and last curve is of  different sign.
If we imagine the succession of curves as a continuous transformation, a change in the sign of the winding number can only occur if at some point on the transformation the origin is contained in some curve. We will argue below that the only way in which the transformation passes through the origin is by having it as a vertex of one of the intermediate curves, which immediately leads to a balanced $L$-line.

Recall from Property~\ref{prop:int_segment} that the origin cannot be at the interior of an edge of any intermediate curve.
Thus, the only way the origin can be swept during the transformation is
(i) if it is a vertex of one of the curves,
or (ii) if it is contained in the space between two consecutive curves (``swept by the curves'').
In the remainder of the proof we show that the latter case cannot occur, that is, the origin is never swept by the local changes between two consecutive curves. This implies that the origin must be a vertex of some curve $\phi(\sigma)$ for some intermediate ordering $\sigma$, in turn implying that the associated $L$-line would be balanced.

The transformation we use is the following:

\begin{equation}
\begin{split}
\phi(\sigma_y)=\phi(\sigma_{y_{3n},\pi}) \rightarrow \phi(\sigma_{y_{3n-1},\pi}) \rightarrow \cdots \rightarrow \phi(\sigma_{y_{1},\pi}) \rightarrow \phi(\sigma^{-1}_x)\\
= \phi(\sigma_{x_{1},\frac{3\pi}{2}}) \rightarrow \cdots \rightarrow \phi(\sigma_{x_{3n},\frac{3\pi}{2}}) \rightarrow \phi(\sigma_{y_1,0})=\phi(\sigma_y^{-1})
\end{split}
\end{equation}

First we give a geometric interpretation of this sequence. Imagine sweeping the lattice with a horizontal line (from top to bottom). At any point of the sweep, we sort the points below the line from bottom to top, and the remaining points are sorted from right to left. By doing so, we would obtain the orderings $\sigma_y=\sigma_{y_{3n},\pi}, \ldots, \sigma_{y_{1},\pi}$, and (once we reach $y=-\infty$) $\sigma_x^{-1}$ (i.e., the reverse of $\sigma_x$). Afterwards, we rotate the line clockwise by $\pi/2$ radians, keeping all points of $S$ to the right of the line, and sweep from left to right. During this second sweep, we sort the points to the right of the line from right to left, and those to the left from top to bottom. By doing this we would obtain the orderings $\sigma^{-1}_x = \sigma_{x_{1},\frac{3\pi}{2}}, \ldots, \sigma_{x_{3n},\frac{3\pi}{2}}$. Once all points have been swept, this process will finish with the ordering $\sigma_{y_1,0}$, which is the reverse of $\sigma_y$.

Thus, to complete the proof it remains to show that the difference between any two consecutive orderings $\sigma$ and $\sigma'$ in the above sequence cannot contain the origin.

Observe that two consecutive orderings differ in at most the position of one point (the one that has just been swept by the line). Thus, there exist two indices $s$ and $t$ such that $s<t$ and $\sigma=(p_1, \ldots, p_s, \ldots, p_t, \ldots, p_{3n})$, and $\sigma'=(p_1, \ldots, p_s, p_t, p_{s+1}, \ldots p_{t-1}, p_{t+1}, \ldots, p_{3n})$ (that is, point $p_t$ moved immediately after $p_s$).

Abusing slightly the notation, we denote by $(q_1, \ldots, q_{3n-1},-q_1, \ldots, -q_{3n-1})$ the vertices of $\phi(\sigma)$ (respectively, $(q'_1, \ldots, q'_{3n-1},-q'_1, \ldots, -q'_{3n-1})$ denotes the vertices of $\phi(\sigma')$). Since only one point has changed its position in the ordering, we can explicitly obtain the differences between the two orderings. Given an index $i\leq 3n-1$, we define $c_i=(-1,-1)$ if $p_i$ is red, $c_i=(2,-1)$ if $p_i$ is blue, or $c_i=(-1,2)$ if $p_i$ is green. Then, we have the following relationship between the vertices of $\phi(\sigma)$ and $\phi(\sigma')$.

\begin{eqnarray}\label{eqntrans}
q'_i= \begin{cases}
q_i              &\mbox{if } i\in \{1, \ldots, s\} \cup \{t , \ldots, 3n-1\} \\
q_{i-1} +c_t &\mbox{if } i\in \{s+1, \ldots, t-1\}
\end{cases}
\end{eqnarray}

Observe that the ordering of the first $s+1$ points and the last $t-1$ points is equal in both permutations. In particular,  the points $q_1$ to $q_s$ and $q_{t}$ to $q_{3n-1}$ do not change between consecutive polygonal curves. For the intermediate indices, the transformation only depends on the color of $p_t$; it consists of a translation by the vector $c_t$.

We now show that the origin cannot be contained in the interior of the quadrilateral $Q_i$ of vertices $q_{i-1},q_i,q'_{i}$, and $q'_{i+1}$, for any $i\in\{s,\ldots, t\}$. Consider the case in which $i\in \{s+2, \ldots , t-1\}$ (that is, neither $i-1$ nor $i+1$ satisfy the first line of Equation~\ref{eqntrans}). Observe that the shape of the quadrilateral only depends on the color of $p_i$ and $p_t$. It is easy to see that when $p_i$ has the same color as $p_t$, $Q_i$ is degenerate and cannot contain the origin. Thus, there are six possible color combinations for $p_i$ and $p_t$ that yield three non-degenerate different quadrilaterals (see Figure~\ref{fig_square}).

\begin{figure}[h!]
\begin{center}
  \includegraphics{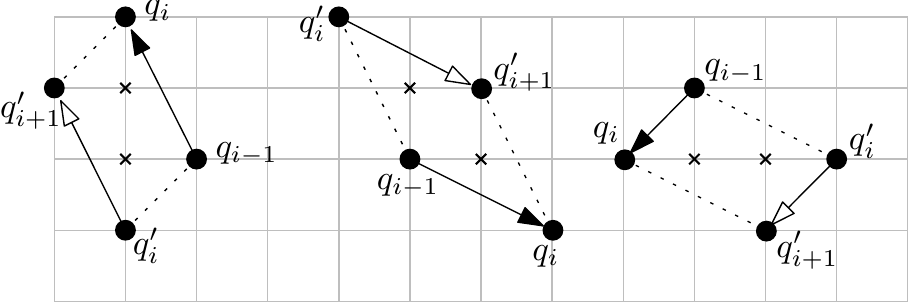}
  \caption{Three options for quadrilateral $Q_i$ depending on the colors of $p_i$ and $p_t$. From left to right $p_i$ is green, blue, and red whereas $p_t$ is red, green, and blue, respectively. The case in which $p_i$ and $p_t$ have reversed colors results in the same quadrilaterals. In all cases, the points in the lattice that are included in the quadrilateral are marked with a cross, thus those are the only possible locations for the origin to be contained in $Q_i$.} \label{fig_square}
\end{center}
\end{figure}

Assume that for some index $i$ we have that $p_t$ is red, $p_i$ is blue (as shown in Figure~\ref{fig_square}, left) and that the quadrilateral $Q_i$ contains the origin. Note that in this case, $q_i$ must be either $(0,1)$ or $(0,2)$. From the definition of the $x$-coordinate of $q_i$, we have $3b_i-k= 0$, and thus we conclude that $ k \equiv 0 \mod 3$. Consider now the $y$-coordinate of $q_i$; recall that this coordinate is equal to $3g_i - k$, which cannot be $1$ or $2$ whenever $k \equiv 0 \mod 3$. The proof for the other quadrilaterals is identical; in all cases, we show that either the $x$ or $y$ coordinate of a vertex of $Q_i$ is zero must simultaneously satisfy: $(i)$ it is congruent to zero modulo 3, and $(ii)$ it is either $1$ or $2$, resulting in a similar contradiction.

Thus, in order to complete the proof it remains to consider the cases in which $i=s+1$ or $i=t-1$. In the former case we have $q_{i-1}=q'_{i}$, and $q_{i}=q'_{i+1}$ in the latter. Whichever the case, quadrilateral $Q_i$ collapses to a triangle, and we have three non-trivial possible color combinations (see Figure~\ref{fig_triangle}). The same methodology shows that none of them can sweep through the origin. 

\begin{figure}[h!]
\begin{center}
  \includegraphics{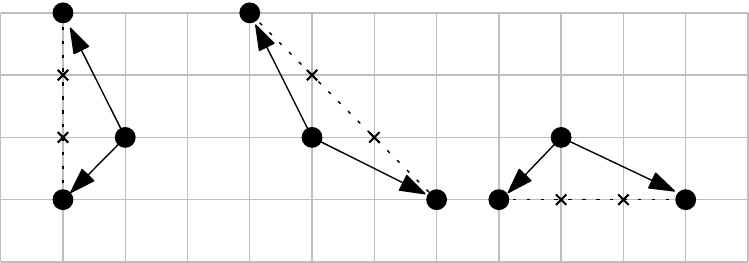}
  \caption{When $i=s+1$ or $i=t-1$ the corresponding quadrilateral $Q_i$ collapses to a triangle. As in the previous case, this results in three different non-degenerate cases, depending on the colors of $p_i$ and $p_t$.} \label{fig_triangle}
\end{center}
\end{figure}

That is, we have transformed the curve from $\phi(\sigma_{y_3n,\pi})$ to its reverse $\phi(\sigma_{y_1, 0})$ in a way that the origin cannot be contained between two successive curves. However, since these curves have winding number of different sign, at some point in our transformation one of the curves must have passed through the origin. The previous arguments show that this cannot have happened at the interior of an edge or at the interior of a quadrilateral between edges of two consecutive curves. Thus it must have happened at a vertex of $\phi(\sigma)$, for some $\sigma\in\mathcal{O}$. In particular, the corresponding $L$-line must be balanced.
\end{proof}

\section{Concluding remarks}\label{section:conclusion}
In this paper we have studied several problems about balanced partitions of $3$-colored sets of points and lines in the plane. As a final remark we observe that our results on double wedges can be viewed as partial answers to the following interesting open problem: Find all $k$ such that, for any $3$-colored balanced set of $3n$ points in general position in the plane, there exists a double wedge containing exactly $k$ points of each color. We have given here an affirmative answer for $k=1,n/2$ and $n-1$ (Theorems \ref{thm:doubleWedge111} and \ref{th:bisectingdoubleWedge}). Theorem \ref{thm:curveIntervals} gives the affirmative answer for all values of $k$ under the constraint that points are in convex position.

\bigskip

\noindent{\bf Acknowledgments} Most of this research took place during the \emph{5th International Workshop on
Combinatorial Geometry} held at UPC-Barcelona from May 30 to June 22, 2012. We thank In\^{e}s Matos, Pablo P\'{e}rez-Lantero, Vera Sacrist\'{a}n and all other
participants for useful discussions.

F.H., M.K., C.S., and R.S. were partially supported by projects MINECO MTM2012-30951, Gen. Cat. DGR2009SGR1040 and DGR2014SGR46,
and ESF EUROCORES programme EuroGIGA, CRP ComPoSe: MICINN Project EUI-EURC-2011-4306.
F.H. and C.S. were partially supported by project MICINN MTM2009-07242.
M.K. also acknowledges the support of the Secretary for Universities and Research of the Ministry of Economy and Knowledge of the Government of Catalonia and the European Union.
R.~I.~S. was funded by Portuguese funds through CIDMA and FCT, within project PEst-OE/MAT/UI4106/2014, and by FCT grant SFRH/BPD/88455/2012.

\section*{References}

\end{document}